\newcommand{\comment}[1]{}
\newcommand{\pa}{p_{ua}}
\newcommand{\pc}{p_{ic}}
\newenvironment{proof}{\noindent {\bf Proof:}~}{\hspace*{\fill}\(\Box\)}
\newtheorem{theorem}{Theorem}
\newtheorem{definition}{Definition}
\def\noflash#1{\setbox0=\hbox{#1}\hbox to 1\wd0{\hfill}}
\begin{document}
\title{Characterizing and Adapting the Consistency-Latency Tradeoff in Distributed Key-value Stores
\footnote{\normalsize This work was supported in part by the following grants: NSF CNS 1409416, NSF CNS 1319527, NSF CCF 0964471, AFOSR/AFRL FA8750-11-2-0084, and a VMware Graduate Fellowship.}}

\author{Muntasir Raihan Rahman$^{1}$, Lewis Tseng$^{1,3}$, Son Nguyen$^{1}$, Indranil Gupta$^{1}$, and Nitin Vaidya$^{2,3}$\\~\\
 \normalsize $^1$ Department of Computer Science,\\
 \normalsize $^2$ Department of Electrical and Computer Engineering, 
 and\\ \normalsize $^3$ Coordinated Science Laboratory\\ \normalsize University of Illinois at Urbana-Champaign\\ \normalsize Email: \{mrahman2, ltseng3, nguyenb1, indy, nhv\}@illinois.edu}

\date{}
\maketitle

\begin{abstract}
{\normalsize

The CAP theorem is a fundamental result that applies to distributed storage systems. In this paper, we first present and prove two CAP-like impossibility theorems. To state these theorems, we present probabilistic models to characterize the three important elements of the CAP theorem: consistency (C), availability or latency (A), and partition tolerance (P). The theorems show the un-achievable envelope, i.e., which combinations of the parameters of the three models make them impossible to achieve together. Next, we present the design of a class of systems called PCAP that perform close to the envelope described by our theorems. In addition, these systems allow applications running on a single data-center to specify either a latency SLA or a consistency SLA. The PCAP systems automatically adapt, in real-time and under changing network conditions, to meet the SLA while optimizing the other C/A metric. We incorporate PCAP into two popular key-value stores -- Apache Cassandra and Riak. Our experiments with these two deployments, under realistic workloads, reveal that the PCAP system satisfactorily meets SLAs, and performs close to the achievable envelope. We also extend PCAP from a single data-center to multiple geo-distributed data-centers.

}
\end{abstract}

\thispagestyle{empty}
\newpage
\setcounter{page}{1}

\section{Introduction}
\label{s_intro}

Storage systems form the foundational platform for modern Internet services such as Web search, analytics, and social networking. Ever increasing user bases and massive data sets have forced users and applications to forgo conventional relational databases, and move towards a new class of scalable storage systems known as NoSQL key-value stores. Many of these distributed key-value stores (e.g., Cassandra~\cite{cassandra}, Riak~\cite{riak}, Dynamo~\cite{DHJ07}, Voldemort~\cite{voldemort}) support a simple GET/PUT interface for accessing and updating data items. The data items are replicated at multiple servers for fault tolerance. In addition, they offer a very weak notion of consistency known as eventual consistency~\cite{V09,ectoday}, which roughly speaking, says that if no further updates are sent to a given data item, all replicas will eventually hold the same value.

 These key-value stores are preferred by applications for whom eventual consistency suffices, but where high availability and low latency (i.e., fast reads and writes~\cite{abadi:cap}) are paramount. Latency is a critical metric for such cloud services because latency is correlated to user satisfaction -- for instance, a 500 ms increase in latency for operations at Google.com can cause a 20\% drop in revenue \cite{activo}. At Amazon, this translates to a \$6M yearly loss per added millisecond of latency \cite{U09}. This correlation between delay and lost retention is fundamentally human. Humans suffer from a phenomenon called \emph{user cognitive drift}, wherein if more than a second (or so) elapses between clicking on something and receiving a response, the user's mind is already elsewhere. 

At the same time, clients in such applications expect freshness, i.e., data returned by a read to a key should come from the latest writes done to that key by any client. For instance, Netflix uses Cassandra to track positions in each video~\cite{C13}, and freshness of data translates to accurate tracking and user satisfaction. 
This implies that clients care about a {\it time-based} notion of data freshness. Thus, this paper focuses on consistency based on the notion of data freshness (as defined later).


The CAP theorem was proposed by Eric Brewer~\cite{B00,B10}, and later formally proved by Gilbert and Lynch~\cite{GL10,lg:cap}. It essentially states that a system can choose at most two of three desirable properties: Consistency (C), Availability (A), and Partition tolerance (P). Recently, Abadi~\cite{abadi:cap} proposed to study the consistency-latency tradeoff, and unified the tradeoff with the CAP theorem. The unified result is called PACELC. It states that when a network partition occurs, one needs to choose between Availability and Consistency, otherwise the choice is between Latency and Consistency. We focus on the latter tradeoff as it is the common case. These prior results provided qualitative characterization of the tradeoff between consistency and availability/latency, while we provide a \textit{quantitative} characterization of the tradeoff.


Concretely, traditional CAP literature tends to focus on situations where ``hard'' network partitions occur and the designer has to choose between C or A, e.g., in geo-distributed data-centers. 
However, individual data-centers themselves suffer far more frequently from {``soft'' partitions~\cite{dean2009}}, arising from periods of elevated message delays or loss rates (i.e., the ``otherwise" part of PACELC) within a data-center. Neither the original CAP theorem nor the existing work on consistency in key-value stores~\cite{pbsj,DHJ07,GBK11,WHCL15,LPC12,cops:sosp11,Eiger,cfrdt,pileus,V09} address such soft partitions for a single data-center. 

In this paper we state and prove two CAP-like impossibility theorems. To state these theorems, we present probabilistic\footnote{By probabilistic, we mean the behavior is statistical over a long time period.} models to characterize the three important elements: soft partition,  latency requirements, and consistency requirements.  All our models take timeliness into account. Our latency model specifies soft bounds on operation latencies, as might be provided by the application in an SLA (Service Level Agreement). Our consistency model captures the notion of data freshness returned by read operations. Our partition model describes propagation delays in the underlying network. The resulting theorems show the un-achievable envelope, i.e., which combinations of the parameters in these three models (partition, latency, consistency) make them impossible to achieve together. Note that the focus of the paper is neither defining a new consistency model nor comparing different types of consistency models. Instead, we are interested in the un-achievable envelope of the three important elements and measuring how close a system can perform to this envelop.

Next, we describe the design of a class of systems called PCAP (short for Probabilistic CAP) that perform close to the envelope described by our theorems. In addition, these systems allow applications running inside a single data-center to specify either a probabilistic latency SLA or a probabilistic consistency SLA. Given a probabilistic latency SLA,  PCAP's adaptive techniques meet the specified operational latency requirement, while optimizing the consistency achieved. Similarly, given a probabilistic consistency SLA, PCAP meets the consistency requirement while optimizing operational latency. 
PCAP does so under real and continuously changing network conditions. There are known use cases that would benefit from an latency SLA -- these include the Netflix video tracking application~\cite{C13}, online advertising~\cite{rtad}, and shopping cart applications~\cite{pileus} -- each of these needs fast response times but is willing to tolerate some staleness. A known use case for consistency SLA is a Web search application~\cite{pileus}, which desires search results with bounded staleness but would like to minimize the response time. While the PCAP system can be used with a variety of consistency and latency models (like PBS~\cite{pbsj}), we use our PCAP models for concreteness.


We have integrated our PCAP system into two key-value stores -- Apache Cassandra~\cite{cassandra} and Riak~\cite{riak}. Our experiments with these two deployments, using YCSB~\cite{ycsb} benchmarks, reveal that PCAP systems satisfactorily meets a latency SLA (or consistency SLA), optimize the consistency metric (respectively latency metric), perform reasonably close to the envelope described by our theorems, and scale well. 



We also extend PCAP from a single data-center to multiple geo-distributed data-centers. The key contribution of our second system (which we call GeoPCAP) is a set of rules for composing probabilistic consistency/latency models from across multiple data-centers in order to derive the global consistency-latency tradeoff behavior. Realistic simulations demonstrate that GeoPCAP can satisfactorily meet consistency/latency SLAs for applications interacting with multiple data-centers, while optimizing the other metric.

\section{Consistency-Latency Tradeoff}
\label{s_pcap}



We consider a key-value store system which provides a read/write API over an asynchronous distributed message-passing network. The system consists of clients and servers, in which, servers are responsible for replicating the data (or read/write object) and ensuring the specified consistency requirements, and clients can invoke a write (or read) operation that stores (or retrieves) some value of the specified key by contacting server(s). Specifically, in the system, data can be propagated from a writer client to multiple servers by a replication mechanism or background mechanism such as read repair~\cite{DHJ07}, and the data stored at servers can later be read by clients. There may be multiple versions of the data corresponding to the same key, and the exact value to be read by reader clients depends on how the system ensures the consistency requirements. Note that as addressed earlier, we define consistency based on freshness of the value returned by read operations (defined below).
We first present our probabilistic models for soft partition, latency and consistency. Then we present our impossibility results. These results only hold for a single data-center. Later in Section~\ref{sec:gpcap} we deal with the multiple data-center case. 


\subsection{Models}
\label{model} 

To capture consistency, we defined a new notion called {\it $t$-freshness}, which is a form of eventual consistency. 
Consider a single key (or read/write object) being read and written concurrently by multiple clients. An operation $O$  (read or write) has a start time $\tau_{start}(O)$ when the client issues $O$, and a finish time $\tau_{finish}(O)$ when the client receives an answer (for a read) or an acknowledgment (for a write). The write operation ends when the client receives an acknowledgment from the server. The value of a write operation can be reflected on the server side (i.e., visible to other clients) any time after the write starts. For clarity of our presentation, we assume that all write operations end in this paper, which is reasonable given client retries. Note that the written value can still propagate to other servers after the write ends by the background mechanism.
We assume that at time $0$ (initial time), the key has a default value.

\begin{definition}
\label{def:fresh}
\noindent{\bf $t$-freshness and $t$-staleness:}
A read operation $R$ is said to be \underline{$t$-fresh} if and only if $R$ returns a value written by any write operation that starts at or after time $\tau_{fresh}(R, t)$, which is defined below:

\begin{enumerate}
\item If there is at least one write starting in the interval $[\tau_{start}(R)-t, \tau_{start}(R)]$: 
then $\tau_{fresh}(R, t) = \tau_{start}(R)-t$.

\item If there is no write starting in the interval $[\tau_{start}(R)-t, \tau_{start}(R)]$, then there are two cases:

\begin{enumerate}
\item No write starts before $R$ starts: then $\tau_{fresh}(R, t)=0$. 

\item Some write starts before $R$ starts: then $\tau_{fresh}(R, t)$ is the start time of the last write operation that starts before
$\tau_{start}(R)-t$.
\end{enumerate}

\end{enumerate}
A read that is not $t$-fresh is said to be \underline{$t$-stale}.
\end{definition}

Note that the above characterization of $t_{fresh}(R,t)$ only depends
on {\em start times} of operations.


\begin{figure}[htbp]
	\centering
		\includegraphics[width=0.45\textwidth]{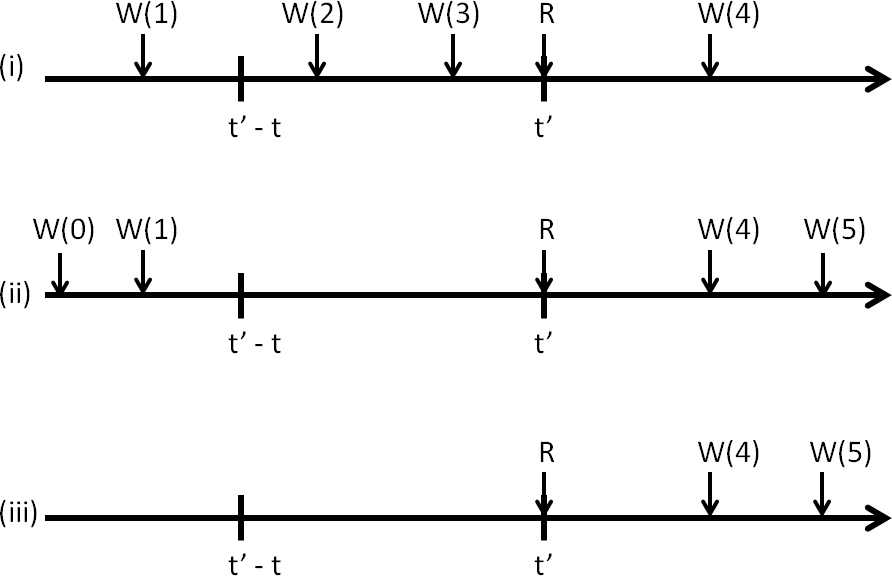}
		\caption{\it Examples illustrating Definition~\ref{def:fresh}. Only start times of each operation are shown.}
	\label{fig:fresh}
\end{figure}

Fig.~\ref{fig:fresh} shows three examples for $t$-freshness. The figure shows the times at which several read and write operations are issued (the time when operations complete are not shown in the figure). $W(x)$ in the figure denotes a write operation with a value $x$. Note that our definition of {\em $t$-freshness} allows a read to return a value that is written by a write issued after the read is issued. In Fig.~\ref{fig:fresh}(i), 
 $\tau_{fresh}(R, t) = \tau_{start}(R)-t = t'-t$;
therefore, $R$ is  $t$-fresh if it returns $2, 3$ or $4$. In Fig.~\ref{fig:fresh}(ii), $\tau_{fresh}(R, t) = \tau_{start}(W(1))$; therefore, $R$ is  $t$-fresh if it returns $1, 4$ or $5$. In Fig.~\ref{fig:fresh}(iii), $\tau_{fresh}(R, t) = 0$; therefore, $R$ is  $t$-fresh if it returns $4, 5$ or the default.



\begin{definition}
\label{def:t-consistency}
\noindent{\bf Probabilistic Consistency:} A key-value store satisfies $(t_c, \pc)$-consistency\footnote{The subscripts $c$ and $ic$ stand for consistency and inconsistency, respectively.} if in \underline{any execution} of the system, the fraction of read operations satisfying $t_c$-freshness is at least $(1-\pc)$.
\end{definition}

\begin{equation*}
\boxed{
\begin{aligned}
\text{Intuitively, $\pc$ is the {\it likelihood of returning stale data},}\\
\text{ given the time-based freshness requirement $t_c$.}
\end{aligned}}
\end{equation*}
\vspace{.5mm}

 Two similar definitions have been proposed previously: (1)  $t$-visibility from the Probabilistically Bounded Staleness (PBS) work \cite{pbsj}, and (2) $\Delta$-atomicity \cite{gls:fun}.
These two metrics do not require a read to return the latest write, but provide a time bound on the staleness of the data returned by the read. The main difference between $t$-freshness and these is that we consider the start time of write operations rather than the end time. This allows us to characterize consistency-latency tradeoff more precisely. While we prefer $t$-freshness, our PCAP system (Section~\ref{s_adaptive}) is modular and could use instead $t$-visibility or $\Delta$-atomicity for estimating data freshness.

{\ As noted earlier, our focus is not comparing different consistency models, nor achieving linearizability. We are interested in the un-achievable envelope of soft partition, latency requirements, and consistency requirements. Traditional consistency models like linearizability can be achieved by delaying the effect of a write. On the contrary, the achievability of $t$-freshness closely ties to the latency of read operations and underlying network behavior as discussed later. In other words, $t$-freshness by itself is not a complete definition.} 

\subsubsection{Use case for $\mathit{t-freshness}$}
Consider a bidding application (e.g., eBay), where everyone can post a bid, and we want every other participant to see posted bids as fast as possible. Assume that User 1 submits a bid, which is implemented as a write request (Figure~\ref{fig:fresh-motive}). User 2 requests to read the bid before the bid write process finishes. The same User 2 then waits a finite amount of time after the bid write completes and submits another read request.  Both of these read operations must reflect User 1's bid, whereas $t$-visibility only reflects the write in User 2's second read (with suitable choice of $t$). The bid write request duration can include time to send back an acknowledgment to the client, even after the bid has committed (on the servers). A client may not want to wait that long to see a submitted bid. This is especially true when the auction is near the end.



\begin{figure}[htbp]
	\centering
		\includegraphics[width=0.7\textwidth]{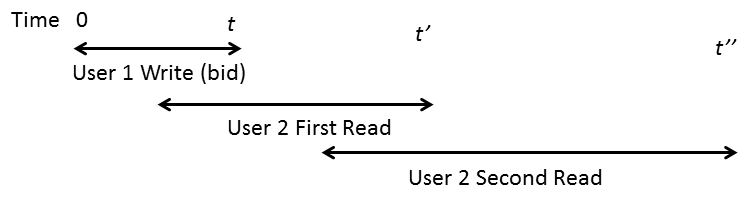}
		\caption{\it Example motivating use of Definition~\ref{def:t-consistency}.}
	\label{fig:fresh-motive}
\end{figure}

We define our probabilistic notion of latency as follows:

\begin{definition}
\label{def:t-latency}
\noindent { {{\bf $t$-latency:}} A read operation $R$ is said to satisfy \underline{$t$-latency} if and only if it completes within $t$ time units of its start time. }
\end{definition}


\begin{definition}
\label{def:t-availability}
\noindent {\bf Probabilistic Latency: } A key-value store satisfies $(t_a, \pa)$-latency\footnote{The subscripts $a$ and $ua$ stand for availability and unavailability, respectively.} if in \underline{any execution} of the system, the fraction of $t_a$-latency read operations is at least $(1-\pa)$.
\end{definition}



\begin{equation*}
\boxed{
\begin{aligned}
\text{Intuitively, given response time requirement $t_a$, }\\
\text{$\pa$ is the {\it likelihood of a read violating the $t_a$}.}
\end{aligned}}
\end{equation*}
\vspace{.3mm}

Finally, we capture the concept of a \emph{soft partition} of the network by defining a probabilistic partition model. {In this section, we assume that the partition model for the network does not change over time. (Later, our implementation and experiments in Section~\ref{exp} will measure the effect of time-varying partition models.)
 
In a key-value store, data can propagate from one client to another via the other servers using different approaches. 
For instance, in Apache Cassandra~\cite{cassandra}, a write might go from a writer client to a coordinator server to a replica server, or from a replica server to another replica server in the form of read repair~\cite{DHJ07}. Our partition model captures the delay of all such propagation approaches. 

\begin{definition}
\label{def:p-partition}
\noindent{\bf Probabilistic Partition:} 

An execution is said to suffer $(t_p,\alpha)$-partition if the fraction $f$ of paths from one client to another client, via a server, which have latency higher than $t_p$, is such that $f\geq\alpha$.


\end{definition}	

Our delay model loosely describes the message delay caused by any underlying network behavior without relying on the assumptions on the implementation of the key-value store. We do not assume eventual delivery of messages. We neither define propagation delay for each message nor specify the propagation paths (or alternatively, the replication mechanisms). This is because we want to have general lower bounds that apply to all systems that satisfy our models.

\subsection{Impossibility Results}
\label{captheorem}

We now present two theorems that characterize the consistency-latency tradeoff in terms of our probabilistic models.

First, we consider the case when the client has tight expectations, i.e., the client expects all data to be fresh within a time bound, and all reads need to be answered within a time bound. 

\begin{theorem}
\label{thm:CAP2}
If $~t_c + t_a < t_p$, then it is impossible to implement a read/write data object under a $(t_p, 0)$-partition while achieving {\em $(t_c, 0)$-consistency}, and {\em $(t_a, 0)$-latency}, i.e., there \underline{exists an execution} such that these three properties cannot be satisfied simultaneously.
\end{theorem}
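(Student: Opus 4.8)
The plan is to exhibit a single adversarial execution in which any correct implementation is forced to violate either $t_c$-freshness or $t_a$-latency on some read, so that $(t_c,0)$-consistency and $(t_a,0)$-latency (both of which quantify over \emph{all} reads in the execution) cannot hold together. Since $(t_p,0)$-partition imposes only $f \ge 0$, which every execution trivially satisfies, I am free to choose the network delays adversarially: I will route the writer-to-reader path (writer client $\to$ server $\to$ reader client) so that its latency strictly exceeds $t_p$, which by Definition~\ref{def:p-partition} realizes a $(t_p,0)$-partition.

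First I would fix two clients and set up the operations. Let the reader client issue a single read $R$ with $\tau_{start}(R) = t_c$, and let a writer client issue a single write $W(v)$ with $\tau_{start}(W) = 0$, where $v$ differs from the default value. Then the write starts inside $[\tau_{start}(R) - t_c, \tau_{start}(R)] = [0, t_c]$, so Case~1 of Definition~\ref{def:fresh} gives $\tau_{fresh}(R, t_c) = 0$; hence $t_c$-freshness forces $R$ to return the value $v$ written by $W$. Next I would invoke $(t_a,0)$-latency, which requires $R$ to complete within $t_a$ of its start, i.e.\ by time $t_c + t_a$.

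The core of the argument is an indistinguishability step. I would build a twin execution identical to the first except that $W$ writes a different value $v' \neq v$ (still starting at time $0$), keeping all network delays the same. Because every information path from the writer to the reader has latency greater than $t_p$, and the value originates at time $0$ while $R$ must finish by $t_c + t_a < t_p$, no message carrying the distinction between $v$ and $v'$ can reach the reader (through any server it contacts) before $R$ completes. Thus the reader's view is identical in both executions up to time $t_c + t_a$, so $R$ returns the same value in both. But $t_c$-freshness demands $v$ in the first execution and $v'$ in the second, a contradiction; therefore freshness and latency cannot both hold in the constructed execution, proving the claim.

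I expect the main obstacle to be making the indistinguishability step fully rigorous in the asynchronous, multi-server message-passing model: I must argue that \emph{no} server contacted by the reader during $[t_c, t_c + t_a]$ can relay $v$ to the reader in time, even though servers may exchange messages along arbitrary paths and the replication mechanism is left unspecified. The key is that any such relay constitutes a writer-$\to$-server-$\to$-reader path whose latency exceeds $t_p > t_c + t_a$ by the partition assumption, so I would phrase the delay construction at the granularity of these client-to-client paths (as in Definition~\ref{def:p-partition}) rather than per individual message, and appeal to determinism of the implementation so that identical reader-side views yield identical return values.
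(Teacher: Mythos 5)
Your proposal is correct and follows essentially the same route as the paper's proof: one write followed by a read issued $t_c$ later, with the writer-to-reader path delayed beyond $t_c + t_a$, so the read cannot both return the written value and finish on time; your twin-execution indistinguishability step is just a more rigorous rendering of the paper's informal claim that the written value ``cannot arrive in time.'' One minor point: since you pin $\tau_{start}(W)=0$ and the paper stipulates that the key already holds a default value at time $0$, it is cleaner to start $W$ at some positive time so that $\tau_{fresh}(R,t_c)=\tau_{start}(W)>0$ and the default value is unambiguously stale (the paper's proof leaves $\tau_{start}(W)$ unpinned for exactly this reason).
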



\begin{proof}
The proof is by contradiction. 
In  a system 
that satisfies all three properties in all executions, consider an execution with only two clients, a writer client and a reader client. There are  two operations: (i) the writer client issues a write $W$, and (ii) the reader client issues a read  $R$ at time $\tau_{start}(R) = \tau_{start}(W)+t_c$. Due to $(t_c, 0)$-consistency, the read $R$ must return the value from $W$. 

Let the delay of the write request $W$ be exactly $t_p$ units of time (this obeys $(t_p, 0)$-partition). Thus, the earliest time that $W$'s value can arrive at the reader client is $(\tau_{start}(W)+t_p)$. However, to satisfy $(t_a, 0)$-latency, the reader client must receive an answer by time $\tau_{start}(R)+t_a=\tau_{start}(W)+t_c+t_a$. However, this time is earlier than $(\tau_{start}(W)+t_p)$ because $~t_c + t_a < t_p$. Hence, the value returned by $W$ cannot satisfy $(t_c, 0)$-consistency. This is a contradiction.
\end{proof}


~

Essentially, the above theorem relates the clients' expectations of freshness ($t_c$) and latency ($t_a$) to the propagation delays ($t_p$). If client expectations are too stringent when the maximum propagation delay is large, then it may not be possible to guarantee both consistency and latency expectations. 

However, if we allow a fraction of the reads to return late (i.e., after $t_a$), or return $t_c$-stale values (i.e., when either $\pc$ or $\pa$ is non-zero), 
then it may be possible to satisfy the three properties together even if $t_c + t_a < t_p$. 
Hence, we consider non-zero $\pc, \pa$ and $\alpha$ in our second theorem. 

\begin{theorem}
\label{thm:CAP-p}
If $~t_c + t_a < t_p$, and $~\pa + \pc < \alpha$, then it is impossible to implement a read/write data object under a $(t_p, \alpha)$-partition while achieving{\em $~(t_c, \pc)$-consistency}, and {\em $~(t_a, \pa)$-latency}, i.e., there \underline{exists an execution} such that these three properties cannot be satisfied simultaneously.
\end{theorem}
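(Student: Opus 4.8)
The plan is to argue by contradiction, exactly in the spirit of Theorem~\ref{thm:CAP2}, but to replace the single writer--reader pair with one long execution containing many such pairs, so that the deterministic obstruction of Theorem~\ref{thm:CAP2} is forced to recur on a definite fraction of the reads. Suppose some implementation satisfies $(t_c,\pc)$-consistency and $(t_a,\pa)$-latency in \emph{every} execution. I would construct a single bad execution on one key consisting of $n$ writer--reader client pairs, indexed $i=1,\dots,n$, spaced far enough apart in time that all operations of pair $i$ finish before any operation of pair $i+1$ begins, so that the pairs never interfere. In pair $i$ the writer issues $W_i$ at time $T_i$ and the reader issues $R_i$ at $\tau_{start}(R_i)=T_i+t_c$. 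Since $W_i$ starts exactly at $\tau_{start}(R_i)-t_c$ and, by the temporal separation, is the only write whose value $R_i$ can return while remaining fresh, Case~1 of Definition~\ref{def:fresh} gives $\tau_{fresh}(R_i,t_c)=T_i$, so $R_i$ is $t_c$-fresh precisely when it returns $W_i$'s value.

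Next I would reuse the timing argument of Theorem~\ref{thm:CAP2} as a per-pair building block. Fix a subset $S\subseteq\{1,\dots,n\}$ with $|S|=\lceil\alpha n\rceil$ and let the adversary give the writer-to-reader propagation path of each pair $i\in S$ latency exactly $t_p$, while all other paths are fast. This execution suffers a $(t_p,\alpha)$-partition because the fraction of slow paths is $|S|/n\ge\alpha$. For $i\in S$ the value of $W_i$ cannot reach the reader before $T_i+t_p$, whereas $t_a$-latency would force $R_i$ to complete by $\tau_{start}(R_i)+t_a=T_i+t_c+t_a<T_i+t_p$. Hence for every $i\in S$ the read $R_i$ is either $t_c$-stale or violates $t_a$-latency (or both) --- the same dichotomy produced by Theorem~\ref{thm:CAP2}.

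Finally I would close with a union-counting argument over $S$. Writing $A$ for the set of $t_c$-stale reads and $B$ for the set of $t_a$-violating reads, the previous step shows $S\subseteq A\cup B$, so $|A|+|B|\ge|A\cup B|\ge|S|\ge\alpha n$. Dividing by $n$,
\begin{equation*}
(\text{fraction stale}) + (\text{fraction slow}) \;\ge\; \alpha .
\end{equation*}
The assumed guarantees bound the two terms by $\pc$ and $\pa$ respectively, so $\pc+\pa\ge\alpha$, contradicting the hypothesis $\pa+\pc<\alpha$. This contradiction proves the theorem.

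The main obstacle is not the arithmetic but making the single construction honor all three probabilistic definitions at once. I need the pairs to be genuinely non-interfering, so that the clean equivalence ``$R_i$ fresh $\Leftrightarrow$ $R_i$ returns $W_i$'' holds with no leakage from neighbouring writes; the temporal separation secures this. I also need the execution's set of client-to-client-via-server paths to consist exactly of the $n$ pair-paths, so that the slow fraction is controlled at $\alpha$ as required by Definition~\ref{def:p-partition}. The rounding in $\lceil\alpha n\rceil$ is harmless, since the ``any execution'' quantifier lets me take $n$ arbitrarily large, driving $|S|/n$ as close to $\alpha$ from above as needed.
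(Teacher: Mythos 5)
Your proposal is correct and follows essentially the same route as the paper's proof: build one long execution of $n$ write--read pairs with each $R_i$ issued $t_c$ after $W_i$, force an $\alpha$ fraction of the write-propagation paths to exceed $t_p$, apply the Theorem~\ref{thm:CAP2} dichotomy (stale or late) to each affected read, and close with the union bound $\alpha n \le |A|+|B| \le (\pc+\pa)n$. The only differences are cosmetic --- you use $n$ temporally separated client pairs where the paper uses a single writer/reader pair with operations spaced $(t_c+t_a)$ apart, and you handle the $\lceil \alpha n\rceil$ rounding explicitly --- neither of which changes the substance of the argument.
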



\begin{proof}
The proof is by contradiction. In a system that satisfies all three properties in all executions, consider an execution with only two clients, a writer client and a reader client. The execution contains alternating pairs of write and read operations $W_1, R_1, W_2, R_2, \ldots, W_n, R_n$, such that:



\begin{enumerate}
\item Write $W_i$ starts at time $(t_c + t_a)\cdot(i-1)$, 

\item Read $R_i$ starts at time $(t_c+t_a)\cdot(i-1)+t_c$, and

\item Each write $W_i$ writes a distinct value $v_i$.
\end{enumerate}

By our definition of $(t_p, \alpha)$-partition, there are at least $n\cdot\alpha$ written values $v_j$'s that have propagation delay $> t_p$. By a similar argument as in the proof of Theorem \ref{thm:CAP2}, each of their corresponding reads $R_j$ are such that $R_j$ cannot both satisfy $t_c$-freshness and also return within $t_a$. That is, $R_j$ is either $t_c$-stale or returns later than $t_a$ after its start time. There are $n\cdot\alpha$ such reads $R_j$; let us call these ``bad" reads.

By definition, the set of reads $S$ that are $t_c$-stale, and the set of reads $A$ that return after $t_a$ are such that $|S| \leq n\cdot\pc$ and $|A| \leq n\cdot\pa$. Put together, these imply:

$n\cdot\alpha \leq |S \cup A| \leq |S| + |A| \leq n\cdot\pc + n\cdot \pa$. 

The first inequality arises because all the ``bad" reads are in $S \cup A$. But this inequality implies that $\alpha \leq \pa + \pc$, which violates our assumptions.
\end{proof}


\section{PCAP Key-value Stores}



\label{s_adaptive}
 Having formally specified the (un)achievable envelope of consistency-latency (Theorem~\ref{thm:CAP-p}), we now move our attention to designing systems that achieve performance close to this theoretical envelope. We also convert our probabilistic models for consistency and latency from Section~\ref{s_pcap} into SLAs, and show how to design adaptive key-value stores that satisfy such probabilistic SLAs inside a single data-center. We call such systems PCAP systems. So PCAP systems (1) can achieve performance close to the theoretical consistency-latency tradeoff envelope, and (2) can adapt to meet probabilistic consistency and latency SLAs inside a single data-center. Our PCAP systems can also alternatively be used with SLAs from PBS~\cite{pbsj} or Pileus~\cite{pileus,tuba}.


\paragraph{Assumptions about underlying Key-value Store} PCAP systems can be built on top of existing key-value stores. We make a few assumptions about such key-value stores. First, we assume that each key is replicated on multiple servers. Second, we assume the existence of a ``coordinator'' server that acts as a client proxy in the system, finds the replica locations for a key (e.g., using consistent hashing~\cite{chord}), forwards client queries to replicas, and finally relays replica responses to clients. Most key-value stores feature such a coordinator~\cite{cassandra,riak}. Third, we assume the existence of a background mechanism such as read repair~\cite{DHJ07} for reconciling divergent replicas. Finally, we assume that the clocks on each server in the system are synchronized using a protocol like NTP so that we can use global timestamps to detect stale data (most key-value stores running within a datacenter already require this assumption, e.g., to decide which updates are fresher). It should be noted that our impossibility results in Section~\ref{s_pcap} do not depend on the accuracy of the clock synchronization protocol. However the sensitivity of the protocol affects the ability of PCAP systems to adapt to network delays. For example, if the servers are synchronized to within 1~ms using NTP, then the PCAP system cannot react to network delays lower than 1~ms.


\paragraph{SLAs} We consider two scenarios, where the SLA specifies either: i) a probabilistic latency requirement, or ii) a probabilistic consistency requirement. In the former case, our adaptive system optimizes the probabilistic consistency while meeting the SLA requirement, whereas in the latter it optimizes probabilistic latency while meeting the SLA. These SLAs are probabilistic, in the sense that they give statistical guarantees to operations over a long duration.

\vspace{1mm}

A latency SLA (i) looks as follows:


\begin{framed}
\noindent {\bf Given: } Latency $SLA=<\pa^{sla}, t_a^{sla}, t_c^{sla}>$;\\
{\bf Ensure that: } The fraction $\pa$ of reads, whose finish and start times differ by more than $ t_a^{sla}$, is such that: $\pa$ stays below $\pa^{sla}$ ;\\ 
{\bf Minimize: } The fraction $\pc$ of reads which do not satisfy $t_c^{sla}$-freshness.
\end{framed}

{ This SLA is similar to latency SLAs used in industry today.} As an example, consider a shopping cart application~\cite{pileus} where the client requires that at most 10\% of the operations take longer than 300~ms, but wishes to minimize staleness. Such an application prefers latency over consistency. In our system, this requirement can be specified as the following PCAP latency SLA:\\ 
\indent $<\pa^{sla}, t_a^{sla}, t_c^{sla}> = <0.1, 300~ms, 0~ms>$.


A consistency SLA looks as follows:
\begin{framed}
\noindent {\bf Given: } Consistency $SLA=<\pc^{sla}, t_a^{sla}, t_c^{sla}>$;\\
{\bf Ensure that: } The fraction $\pc$ of reads that do not satisfy $t_c^{sla}$-freshness is such that: $\pc$ stays below $\pc^{sla}$ ;\\
{\bf Minimize: } The fraction $\pa$ of reads whose finish and start times differ by more than $ t_a^{sla}$.
\end{framed}

Note that as mentioned earlier, consistency is defined based on freshness of the value returned by read operations. As an example, consider a web search application that wants to ensure no more than 10\% of search results return data that is over 500~ms old, but wishes to minimize the fraction of operations taking longer than 100~ms~\cite{pileus}. Such an application prefers consistency over latency. This requirement can be specified as the following PCAP consistency SLA:\\ 
\indent $<\pc^{sla}, t_a^{sla}, t_c^{sla}> = <0.10, 500~ms, 100~ms>$.



Our PCAP system can leverage three control knobs to meet these SLAs: 1)~{\it read delay}, 2)~{\it read repair rate}, and 3)~{\it consistency level}. The last two of these are present in most key-value stores. The first (read delay) has been discussed in previous literature~\cite{rdamazon,pbsj,Fan:2015:UCC:2752939.2752949,GW2014,Zawirski:2015:WFR:2814576.2814733}.  


\subsection{Control Knobs}
\label{knob}

\begin{figure}
{\small
\begin{center}
\begin{tabular}{|l|c|r|} \hline
Increased Knob	& Latency 	& Consistency\\ \hline \hline
Read Delay 		& Degrades	& Improves \\
Read Repair Rate 	& Unaffected	& Improves \\
Consistency Level 	& Degrades	& Improves \\ \hline
\end{tabular}
\end{center}
}
\caption{\it  {Effect of Various Control Knobs}.}
\label{table:knobs}
\end{figure}

Table~\ref{table:knobs} shows the effect of our three control knobs on latency and consistency. We discuss each of these knobs and explain the entries in the table. 


The knobs  of Table~\ref{table:knobs} are all directly or indirectly applicable to the read path in the key-value store.  As an example, the knobs pertaining to the Cassandra query path are shown in  Fig.~\ref{fig:knobs}, which shows the four major steps involved in answering a read query from a front-end to the key-value store cluster: (1) Client sends a read query for a key to a coordinator server in the key-value store cluster; (2) Coordinator forwards the query to one or more replicas holding the key; (3) Response is sent from replica(s) to coordinator; (4) Coordinator forwards response with highest timestamp to client; (5) Coordinator does {\it read repair} by updating replicas, which had returned older values, by sending them the freshest timestamp value for the key. Step (5) is usually performed in the background.
\begin{figure}[htbp]
	\centering
		\includegraphics[width=0.45\textwidth]{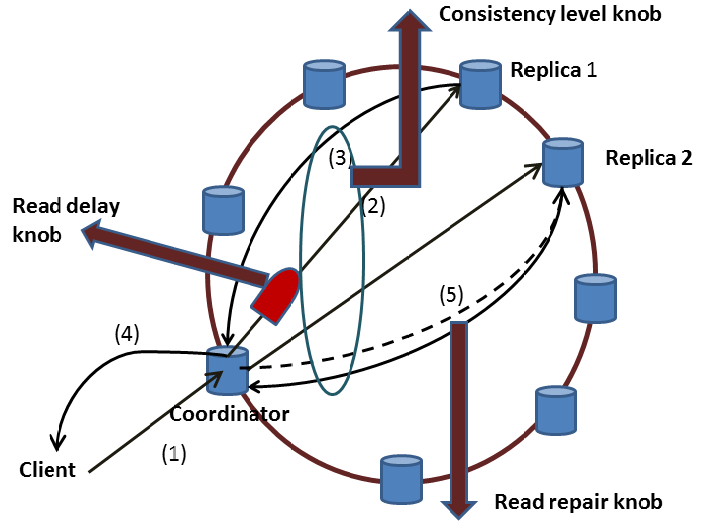}
	\caption{\it Cassandra Read Path and PCAP Control Knobs.}
	\label{fig:knobs}
\end{figure}

A {\it read delay} involves the coordinator artificially delaying the read query for a specified duration of time before forwarding it to the replicas. i.e., between step (1) and step (2). This gives the system some time to converge after previous writes. Increasing the value of read delay improves consistency (lowers $\pc$) and { degrades} latency (increases $\pa$). Decreasing read delay achieves the reverse. Read delay is an attractive knob because: 1)  it does not interfere with client specified parameters (e.g., consistency level in Cassandra~\cite{cassie-cl}), and 2)  it can take any non-negative continuous value instead of only discrete values allowed by consistency levels. Our PCAP system inserts read delays only when it is needed to satisfy the specified SLA.


However, read delay cannot be negative, as one cannot speed up a query and send it back in time. This brings us to our second knob: read repair rate. Read repair was depicted as distinct step (5) in our outline of Fig.~\ref{fig:knobs}, and is typically performed in the background. The coordinator maintains a buffer of recent reads where some of the replicas returned older values along with the associated freshest value. It periodically picks an element from this buffer and updates the appropriate replicas. { In key-value stores like Apache Cassandra and Riak, read repair rate is an accessible configuration parameter per column family}.

Our read repair rate knob is the probability with which a given read that returned stale replica values will be added to the read repair buffer. Thus, a read repair rate of 0 implies no read repair, and replicas will be updated only by subsequent writes. Read repair rate = 0.1 means the coordinator performs read repair for 10\% of the read requests.   

Increasing (respectively, decreasing) the read repair rate can improve (respectively degrade) consistency.  Since the read repair rate does not directly affect the read path (Step (5) described earlier, is performed in the background), it does not affect latency. Table~\ref{table:knobs} summarizes this behavior.\footnote{ Although read repair rate does not affect latency directly, it introduces some background traffic and can impact propagation delay. While our model ignores such small impacts, our experiments reflect the net effect of the background traffic.} 

The third potential control knob is consistency level. Some key-value stores allow the client to specify, along with each read or write operation, how many replicas the coordinator should wait for (in step (3) of Fig.~\ref{fig:knobs}) before it sends the reply back in step (4). For instance, Cassandra offers consistency levels: \texttt{ONE}, \texttt{TWO}, \texttt{QUORUM}, \texttt{ALL}. As one increases consistency level from \texttt{ONE} to \texttt{ALL}, reads are delayed longer (latency decreases) while the possibility of returning the latest write rises (consistency increases). 

Our PCAP system relies primarily on read delay and repair rate as the control knobs.  Consistency level can be used as a control knob only for applications in which user expectations will not be violated, e.g., when reads do not specify a specific discrete consistency level. That is, if a read specifies a higher consistency level, it would be prohibitive for the PCAP system to degrade the consistency level as this may violate client expectations. Techniques like continuous partial quorums (CPQ)~\cite{GolabCPQ15}, and adaptive hybrid quorums~\cite{pbsf} fall in this category, and thus interfere with application/client expectations. Further, read delay and repair rate are {\it non-blocking} control knobs under replica failure, whereas consistency level is {\it blocking}. For example, if a Cassandra client sets consistency level to \texttt{QUORUM} with replication factor 3, then the coordinator will be blocked if two of the key's replicas are on failed nodes. On the other hand, under replica failures read repair rate does not affect operation latency, while read delay only delays reads by a maximum amount. 

\subsection{Selecting A Control Knob} 
\label{choice}
As the primary control knob, the PCAP system prefers read delay over read repair rate. This is because the former allows tuning both consistency and latency, while the latter affects only consistency. The only exception occurs when during the PCAP system adaptation process, a state is reached where consistency needs to be degraded (e.g., increase $\pc$ to be closer to the SLA) but the read delay value is already zero. Since read delay cannot be lowered further, in this instance the PCAP system switches to using the secondary knob of read repair rate, and starts decreasing this instead. 

Another reason why read repair rate is not a good choice for the primary knob is that it takes longer to estimate $\pc$ than for read delay. Because read repair rate is a probability, the system needs a larger number of samples (from the operation log) to accurately estimate the actual $\pc$ resulting from a given read repair rate. For example, in our experiments, we observe that the system needs to inject $k\ge 3000$ operations to obtain an accurate estimate of $\pc$, whereas only $k=100$ suffices for the read delay knob.

\vspace{1.5mm}
\subsection{PCAP Control Loop}
\label{control}

\begin{figure}
\begin{algorithmic}[1]
\Procedure 
{control}{$\mathcal{SLA}=<\pc^{sla},t_c^{sla},t_a^{sla}>, \epsilon$}
	\State $\pc^{sla'} := \pc^{sla}-\epsilon$;
	\State Select control\_knob; // {\bf (Sections~\ref{knob}, \ref{choice})}
	\State $inc := 1$;
	\State $dir = +1$;
	\While{(true)}
		\State Inject $k$ new operations (reads and writes) 
		\State \,\,\,into store;
		\State Collect log $\mathcal{L}$ of recent completed reads  
		\State \,\,\,  and writes (values, start and finish times);
		\State Use $\mathcal{L}$ to calculate
		\State \,\,\,  $\pc$ and $\pa$; // {\bf (Section~\ref{sec:complexity})}
		\State $new\_dir := (\pc > \pc^{sla'})?+1:-1$;
		\If{$new\_dir = dir$}
			\State $inc := inc*2$; // Multiplicative increase
			\If {$inc > MAX\_INC$}
				\State $inc := MAX\_INC$:
			\EndIf
		\Else
			\State $inc := 1$; // Reset to unit step
			\State $dir := new\_dir$;	// Change direction
		\EndIf
		\State $control\_knob := control\_knob + inc * dir$;
	\EndWhile	
\EndProcedure
\end{algorithmic}
\caption{\it Adaptive Control Loop for Consistency SLA.}
\label{alg:adapt}
\end{figure}

The PCAP control loop adaptively tunes control knobs to always meet the SLA under continuously changing network conditions. The control loop for consistency SLA is depicted in Fig.~\ref{alg:adapt}. The control loop for a latency SLA is analogous and is not shown.
 

This control loop runs at a standalone server called the PCAP Coordinator.\footnote{The PCAP Coordinator is a special server, and is different from Cassandra's use of a coordinator for clients to send reads and writes.} This server runs an infinite loop. In each iteration, the coordinator: i) injects $k$ operations into the store (line 6), ii) collects the log $\mathcal{L}$ for the $k$ recent operations in the system (line 8), iii) calculates $\pa, \pc$  (Section~\ref{sec:complexity}) from $\mathcal{L}$ (line 10), and iv) uses these to change the knob (lines 12-22). 

The behavior of the control loop in Fig.~\ref{alg:adapt} is such that the system will converge to ``around'' the specified SLA. Because our original latency (consistency) SLAs require $\pa$ ($\pc$) to stay below the SLA, we introduce a \emph{laxity} parameter $\epsilon$,  subtract $\epsilon$ from the target SLA, and treat this as the target SLA in the control loop. Concretely, given a target consistency SLA $<\pc^{sla}, t_a^{sla}, t_c^{sla}>$, where the goal is to control the fraction of stale reads to be under $\pc^{sla}$, we control the system such that $\pc$ quickly converges  around $\pc^{sla'} = \pc^{sla}-\epsilon$, and thus stay below $\pc^{sla}$. Small values of $\epsilon$ suffice to guarantee convergence (for instance, our experiments use $\epsilon\le0.05$).

We found that the naive approach of  changing the control knob by the smallest unit increment (e.g., always 1~ms changes in read delay) resulted in a long convergence time. Thus, we opted for a {\it multiplicative} approach (Fig.~\ref{alg:adapt}, lines 12-22) to ensure quick convergence. 


We explain the control loop via an example. For concreteness, suppose only the read delay knob (Section~\ref{knob}) is active in the system, and that the system  has a consistency SLA. Suppose $\pc$ is higher than $\pc^{sla'}$. The multiplicative-change strategy starts incrementing the read delay, initially starting with a unit step size (line 3). This step size is exponentially {\it increased} from one iteration to the next, thus multiplicatively increasing read delay (line 14). This continues until the measured $\pc$ goes just under $\pc^{sla'}$. At this point, the {\it new\_dir} variable changes sign (line 12), so the strategy reverses direction, and the step is reset to unit size (lines 19-20). In subsequent iterations, the read delay starts {\it decreasing} by the step size. Again, the step size is increased exponentially until $\pc$ just goes above $\pc^{sla'}$. Then its direction is reversed again, and this process continues similarly thereafter. Notice that (lines 12-14) from one iteration to the next, as long as $\pc$ continues to remain above (or below) $\pc^{sla'}$, we have that: i) the direction of movement does not change, and ii) exponential increase continues. At steady state, the control loop keeps changing direction with a unit step size ({ bounded oscillation}), and the metric stays converged under the SLA. Although advanced techniques such as \emph{time dampening} can further reduce oscillations, we decided to avoid them to minimize control loop tuning overheads. Later in Section~\ref{sec:gpcap}, we utilized control theoretic techniques for the control loop in geo-distributed settings to reduce excessive oscillations. 

In order to prevent large step sizes, we cap the maximum step size (line 15-17). { For our experiments, we do not allow read delay to exceed  10~ms, and the unit step size is set to 1~ms.} 

We preferred active measurement (whereby the PCAP Coordinator injects queries rather than passive due to two reasons: i) the active approach gives the PCAP Coordinator better control on convergence, thus convergence rate is more uniform over time, and ii) in the passive approach if the client operation rate were to become low, then either the PCAP Coordinator would need to inject more queries, or convergence would slow down. Nevertheless, in Section~\ref{sec:passive}, we show results using a passive measurement approach. Exploration of hybrid active-passive approaches based on an operation rate threshold could be an interesting direction. 

Overall our PCAP controller satisfies SASO (Stability, Accuracy, low Settling time, small Overshoot) control objectives~\cite{Hellerstein:2004:FCC:975344}. 

\subsection{Complexity of Computing $\pa$ and $\pc$}
\label{sec:complexity} 

We show that the computation of $\pa$ and $\pc$ (line 10, Fig.~\ref{alg:adapt}) is efficient. Suppose there are $r$ reads and $w$ writes in the log, thus log size $k=r+w$. Calculating $\pa$ makes a linear pass over the read operations, and compares the difference of their finish and start times with $t_a$. This takes $O(r)=O(k)$.

{$\pc$ is calculated  as follows. We first extract and sort all the writes {according to start timestamp}, inserting each write into a hash table under key {$<$object value, write key, write timestamp$>$}.
In a second pass over the read operations, we extract its matching write by using the hash table key (the third entry of the hash key is the same as the read's returned value timestamp). We also extract neighboring writes of this matching write in constant time (due to the sorting), and thus calculate $t_c$-freshness for each read. The first pass takes time $O(r + w + w\log w)$, while the second pass takes $O(r+w)$}. {The total time complexity to calculate $\pc$ is thus $O(r + w + w\log w)=O(k\log k$).

\section{PCAP for Geo-distributed Settings}
\label{sec:gpcap}

In this section we extend our PCAP system from a single data-center to multiple geo-distributed data-centers. We call this system GeoPCAP.

\subsection{System Model} Assume there are $n$ data-centers. Each data-center stores multiple replicas for each data-item. When a client application submits a query, the query is first forwarded to the data-center closest to the client. We call this data-center the \emph{local} data-center for the client. If the local data-center stores a replica of the queried data item, that replica might not have the latest value, since write operations at other data-centers could have updated the data item. Thus in our system model, the local data-center contacts one or more of other { remote} data-centers, to retrieve (possibly) fresher values for the data item. 




\subsection{Probabilistic Composition Rules}
Each data-center is running our PCAP-enabled key-value store. Each such PCAP instance defines per data-center probabilistic latency and consistency models (Section~\ref{s_pcap}). To obtain the global behavior, we need to compose these probabilistic consistency and latency/availability models across different data-centers. This is done by our composition rules.

The composition rules for merging independent latency/consistency models from data-centers check whether the SLAs are met by the composed system. Since single data-center PCAP systems define probabilistic latency and consistency models, our composition rules are also probabilistic in nature. However in reality, our composition rules do not require all data-centers to run PCAP-enabled key-value stores systems. As long as we can measure consistency and latency at each data-center, we can estimate the probabilistic models of consistency/latency at each data-center and use our composition rules to merge them. 

We consider two types of composition rules: (1) \texttt{QUICKEST} (Q),  where at-least one data-center (e.g., the local or closest remote data-center) satisfies client specified latency or freshness (consistency) guarantees; and (2) \texttt{ALL} (A), where all the data-centers must satisfy latency or freshness guarantees. These two are, respectively, generalizations of Apache Cassandra multi-data-center deployment~\cite{cassandra-geo} consistency levels (CL): \texttt{LOCAL\_QOURUM} and \texttt{EACH\_QUORUM}.

Compared to Section~\ref{s_pcap}, which analyzed the fraction of executions that satisfy a predicate (the proportional approach), in this section we use a simpler probabilistic approach. This is because although the proportional approach is more accurate, it is more intractable than the probabilistic model in the geo-distributed case.


\begin{figure}
{\small
\begin{center}
\begin{tabular}{|l|l|l|l|} \hline
Consistency/Latency/WAN	& Composition & $\forall j, \,\,t_a^j = t$? & Rule\\ \hline \hline
Latency 		& \texttt{QUICKEST}	& Y & $\pa^c(t) = \Pi_j \,\,\pa^j$, $\forall j, \,\,t_a^j = t$ \\ \hline
Latency 	& \texttt{QUICKEST}	& N & $\pa^c(min_{j}\,\,t_a^j) \geq \Pi_j \,\,\pa^j \geq \pa^c(max_{j}\,\,t_a^j)$, \\
&&&$min_{j}\,\,t_a^j \leq t_a^c \leq max_{j}\,\,t_a^j$ \\ \hline
Latency  	& \texttt{ALL}	& Y & $\pa^c(t) = 1 - \Pi_j\,\,(1- \pa^j)$, $\forall j, \,\,t_a^j = t$ \\ \hline
Latency & \texttt{ALL} & N & $\pa^c(min_{j}\,\,t_a^j) \geq 1 - \Pi_j\,\,(1- \pa^j) \geq \pa^c(max_{j}\,\,t_a^j)$, \\
&&&$min_{j}\,\,t_a^j \leq t_a^c \leq max_{j}\,\,t_a^j$ \\ \hline
Consistency 		& \texttt{QUICKEST}	& Y & $\pc^c(t) = \Pi_j \,\,\pc^j$, $\forall j, \,\,t_c^j = t$ \\ \hline
Consistency 	& \texttt{QUICKEST}	& N & $\pc^c(min_{j}\,\,t_c^j) \geq \Pi_j \,\,\pc^j \geq \pc^c(max_{j}\,\,t_c^j)$, \\
&&&$min_{j}\,\,t_c^j \leq t_c^c \leq max_{j}\,\,t_c^j$ \\ \hline
Consistency  	& \texttt{ALL}	& Y & $\pc^c(t) = 1 - \Pi_j\,\,(1- \pc^j)$, $\forall j, \,\,t_c^j = t$ \\ \hline
Consistency & \texttt{ALL} & N & $\pc^c(min_{j}\,\,t_c^j) \geq 1 - \Pi_j\,\,(1- \pc^j) \geq \pc^c(max_{j}\,\,t_c^j)$, \\
&&&$min_{j}\,\,t_c^j \leq t_c^c \leq max_{j}\,\,t_c^j$ \\ \hline
Consistency-WAN & N.~A.~ & N.~A.~ & $Pr[X+Y \geq t_c+t_p^G] \geq \pc\cdot\alpha^G$ \\ \hline
Latency-WAN & N.~A.~ & N.~A.~ & $Pr[X+Y \geq t_a+t_p^G] \geq \pa\cdot\alpha^G$ \\ \hline

\end{tabular}
\end{center}
}
\caption{\it  {GeoPCAP Composition Rules}.}
\label{table:composition}
\end{figure}

Our probabilistic composition rules fall into three categories: (1) composing consistency models; (2) composing latency models; and (3) composing a wide-area-network (WAN) partition model with a data-center (consistency or latency) model. The rules are summarized in Figure~\ref{table:composition}, and we discuss them next.

\subsubsection{Composing latency models}
\label{sec:lat-comp}
 Assume there are $n$ data-centers storing the replica of a key with latency models $(t_a^1, \pa^1), (t_a^2, \pa^2), \ldots, (t_a^n, \pa^n)$. Let $\mathcal{C}^A$ denote the composed system. Let $(\pa^c,t_a^c)$ denote the latency model of the composed system $\mathcal{C}^A$. This indicates that the fraction of reads in $\mathcal{C}^A$ that complete within $t_a^c$ time units is at least $(1 - \pa^c)$. This is the latency SLA expected by clients. Let $\pa^c(t)$ denote the probability of missing deadline by $t$ time units in the composed model. Let $X_j$ denote the random variable measuring read latency in data center $j$. Let $E_j(t)$ denote the event that $X_j > t$. By definition we have that, $Pr[E_j(t_a^j)] = \pa^j$, and  $Pr[\bar{E}_j(t_a^j)] = 1 - \pa^j$. Let $f_j(t)$ denote the cumulative distribution function (CDF) for $X_j$. So by definition, $f_j(t_a^j) = Pr [X_j \leq t_a^j] = 1 - Pr [X_j > t_a^j]$. The following theorem articulates the probabilistic latency composition rules:
 
\begin{theorem}
\label{thm:latency-comp}
Let $n$ data-centers store the replica for a key with latency models $(t_a^1, \pa^1), (t_a^2, \pa^2), \ldots, (t_a^n, \pa^n)$. Let $\mathcal{C}^A$ denote the composed system with latency model $(\pa^c,t_a^c)$. Then for composition rule \texttt{QUICKEST} we have: 

\begin{eqnarray}
\label{eq:lat-comp-q}
\pa^c(min_{j}\,\,t_a^j) \geq \Pi_j \,\,\pa^j \geq \pa^c(max_{j}\,\,t_a^j), \\and \,\,\, min_{j}\,\,t_a^j \leq t_a^c \leq max_{j}\,\,t_a^j, \nonumber \\ where j \in\{1,\cdots,n\} \nonumber.
\end{eqnarray}

For composition rule \texttt{ALL}, 

\begin{eqnarray}
\label{eq:L}
\pa^c(min_{j}\,\,t_a^j) \geq 1 - \Pi_j\,\,(1- \pa^j) \geq \pa^c(max_{j}\,\,t_a^j), \\and \,\,\, min_{j}\,\,t_a^j \leq t_a^c \leq max_{j}\,\,t_a^j, \nonumber \\ where j \in\{1,\cdots,n\} \nonumber.
\end{eqnarray}

\end{theorem}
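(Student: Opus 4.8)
The plan is to reduce both composition rules to statements about the composed tail probability $\pa^c(t) = Pr[\text{a read in } \mathcal{C}^A \text{ misses deadline } t]$, viewed as a function of the single argument $t$, and then to exploit the monotonicity of each CDF $f_j$. The one structural assumption the argument needs is that the per-data-center latency variables $X_1,\ldots,X_n$ are mutually independent, which is what lets every joint event factor into a product; I would state this up front.

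First I would treat the \texttt{QUICKEST} rule. Here a read in $\mathcal{C}^A$ completes within $t$ as soon as the quickest of the $n$ data-centers answers within $t$, so the composed latency is $\min_j X_j$ and the deadline $t$ is missed precisely when every data-center misses it. By independence this gives $\pa^c(t) = Pr[\bigcap_j E_j(t)] = \prod_j Pr[X_j > t]$. Since $f_j$ is non-decreasing, the tail $Pr[X_j > t] = 1 - f_j(t)$ is non-increasing in $t$: evaluating at $t = \min_j t_a^j$ makes each factor at least $Pr[X_j > t_a^j] = \pa^j$, while evaluating at $t = \max_j t_a^j$ makes each factor at most $\pa^j$. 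Multiplying the $n$ factor-wise inequalities yields $\pa^c(\min_j t_a^j) \geq \prod_j \pa^j \geq \pa^c(\max_j t_a^j)$, which is exactly \eqref{eq:lat-comp-q}.

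Next I would handle the \texttt{ALL} rule symmetrically. Now a read completes within $t$ only when all data-centers answer within $t$, so the composed latency is $\max_j X_j$ and by complementation and independence $\pa^c(t) = 1 - \prod_j f_j(t)$. Because each $f_j$ is non-decreasing, $f_j(\min_k t_a^k) \le f_j(t_a^j) = 1 - \pa^j \le f_j(\max_k t_a^k)$; taking products over $j$ and subtracting from $1$ (which reverses the direction) produces $\pa^c(\min_j t_a^j) \geq 1 - \prod_j(1-\pa^j) \geq \pa^c(\max_j t_a^j)$, i.e. \eqref{eq:L}. Finally, for the bracketing claim $\min_j t_a^j \le t_a^c \le \max_j t_a^j$ common to both rules, I would observe that $\pa^c(\cdot)$ is monotone and, under the mild assumption that the latency CDFs are continuous, continuous; since the aggregate $\prod_j \pa^j$ (resp. $1 - \prod_j(1-\pa^j)$) was just shown to lie between $\pa^c(\min_j t_a^j)$ and $\pa^c(\max_j t_a^j)$, the intermediate value theorem supplies a $t_a^c$ in that interval at which $\pa^c$ attains it, and this $t_a^c$ is the composed deadline.

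The computations are routine; the step requiring the most care is the bookkeeping of the two inequality reversals — one because the tail $1-f_j$ runs opposite to $f_j$ (\texttt{QUICKEST}), and one because of the outer $1-(\cdot)$ (\texttt{ALL}) — so that $\min$ and $\max$ land on the correct sides of each bound. The only genuine modeling commitment to flag is the cross-data-center independence of the $X_j$: without it the product forms for $\pa^c(t)$ fail and one would be left with weaker Fréchet-type bounds rather than the clean products above.
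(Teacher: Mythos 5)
Your proposal is correct and follows essentially the same route as the paper's proof: independence across data-centers gives the product form for $\pa^c(t)$, and monotonicity of each CDF $f_j$ (hence of the tails $1-f_j$) yields the two-sided bounds by evaluating at $\min_j t_a^j$ and $\max_j t_a^j$. You go slightly further than the paper in two harmless ways --- you work out the \texttt{ALL} case explicitly (the paper leaves it ``to the reader'') and you justify the bracketing $\min_j t_a^j \leq t_a^c \leq \max_j t_a^j$ via the intermediate value theorem, which the paper asserts without argument --- but the decomposition and key steps are identical.
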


\begin{proof}
We outline the proof for composition rule \texttt{QUICKEST}. In \texttt{QUICKEST}, a latency deadline $t$ is violated in the composed model when all data-centers miss the $t$ deadline. This happens with probability $\pa^c(t)$ (by definition). We first prove a simpler Case 1, then the general version in Case 2. 

Case 1: Consider the simple case where all $t_a^j$ values are identical, i.e., $\forall j, t_a^j = t_a$: $\pa^c(t_a) = Pr [\cap_i E_i(t_a)] = \cap_i Pr[E_i(t_a)] = \Pi_i \pa^i$ (assuming independence across data-centers).

Case 2: 

Let, 

\begin{equation}
	t_a^i = min_j \,\,t_a^j
\end{equation}

Then,

\begin{equation}
\forall j, \,\,t_a^j \geq t_a^i
\end{equation}

Then, by definition of CDF function,

\begin{equation}
\forall j, \,\, f_j(t_a^i) \leq f_j(t_a^j)
\end{equation}

By definition,

\begin{equation}
\forall j, \,\, (Pr[X_j \leq t_a^i] \leq Pr[X_j\leq t_a^j])
\end{equation}

\begin{equation}
\forall j, \,\, (Pr[X_j > t_a^i] \geq Pr[X_j > t_a^j])
\end{equation}

Multiplying all,

\begin{equation}
\Pi_j\,\,Pr[X_j > t_a^i] \geq \Pi_j\,\,Pr[X_j > t_a^j]
\end{equation}

But this means,

\begin{equation}
\pa^c(t_a^i) \geq \Pi_j\,\,\pa^j
\end{equation}

\begin{equation}
\label{eq-proof-left}
\pa^c(min_j\,\,t_a^j) \geq \Pi_j\,\,\pa^j
\end{equation}

Similarly, let

\begin{equation}
	t_a^k = max_j \,\,t_a^j
\end{equation}

Then,

\begin{equation}
\forall j, \,\,t_a^k \geq t_a^j
\end{equation}

\begin{equation}
\forall j, \,\, (Pr[X_j > t_a^j] \geq Pr[X_j > t_a^k])
\end{equation}

\begin{equation}
\Pi_j\,\,Pr[X_j > t_a^j] \geq \Pi_j\,\,Pr[X_j > t_a^k]
\end{equation}

\begin{equation}
\Pi_j\,\,\pa^j \geq \pa^c(t_a^k)
\end{equation}

\begin{equation}
\label{eq-proof-right}
\Pi_j\,\,\pa^j \geq \pa^c(max_j\,\,t_a^j)
\end{equation}

Finally combining Equations~\ref{eq-proof-left}, and~\ref{eq-proof-right}, we get Equation~\ref{eq:lat-comp-q}.
 
  
  


The proof for composition rule \texttt{ALL} follows similarly. In this case, a latency deadline $t$ is satisfied when all data-centers satisfy the deadline. So a deadline miss in the composed model means at-least one data-center misses the deadline. The derivation of the composition rules are similar and we invite the reader to work them out to arrive at the equations depicted in Figure~\ref{table:composition}. 
\end{proof}

\subsubsection{Composing consistency models}
\label{sec:con-comp}

\begin{figure}[htbp]
	\centering
		\includegraphics[width=0.8\textwidth]{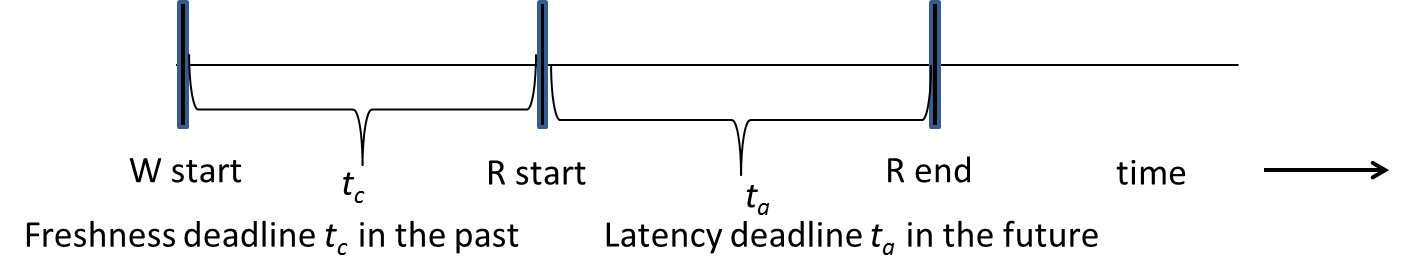}
	\caption{\it Symmetry of freshness and latency requirements.}
	\label{fig:symmetry}
\end{figure}

$t$-latency (Definition~\ref{def:t-latency}) and $t$-freshness (Definitions~\ref{def:fresh}) guarantees are time-symmetric (Figure~\ref{fig:symmetry}). While $t$-lateness can be considered a deadline in the future, $t$-freshness can be considered a deadline in the past. This means that for a given read, $t$-freshness constrains how old a read value can be. So the composition rules remain the same for consistency and availability. 

Thus the consistency composition rules can be obtained by substituting $\pa$ with $\pc$ and $t_a$ with $t_c$ in the latency composition rules (last 4 rows in Table~\ref{table:composition}).

This leads to the following theorem for consistency composition:

\begin{theorem}
\label{thm:con-comp}
Let $n$ data-centers store the replica for a key with consistency models $(t_c^1, \pc^1), (t_c^2, \pc^2), \ldots, (t_c^n, \pc^n)$. Let $\mathcal{C}^A$ denote the composed system with consistency model $(\pc^c,t_c^c)$. Then for composition rule \texttt{QUICKEST} we have: 

\begin{eqnarray}
\label{eq:con-comp-q}
\pc^c(min_{j}\,\,t_c^j) \geq \Pi_j \,\,\pc^j \geq \pc^c(max_{j}\,\,t_c^j), \\and \,\,\,  min_{j}\,\,t_c^j \leq t_c^c \leq max_{j}\,\,t_c^j, \nonumber \\ where j \in\{1,\cdots,n\} \nonumber.
\end{eqnarray}

For composition rule \texttt{ALL}, 

\begin{eqnarray}
\label{eq:con-comp-a}
\pc^c(min_{j}\,\,t_c^j) \geq 1 - \Pi_j\,\,(1- \pc^j) \geq \pc^c(max_{j}\,\,t_c^j), \\ and \,\,\,   min_{j}\,\,t_c^j \leq t_c^c \leq max_{j}\,\,t_c^j, \nonumber \\ where j \in\{1,\cdots,n\} \nonumber.
\end{eqnarray}
\end{theorem}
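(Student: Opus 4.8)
The plan is to reduce Theorem~\ref{thm:con-comp} to the already-established latency composition result (Theorem~\ref{thm:latency-comp}) by exploiting the time-symmetry between $t$-latency and $t$-freshness noted above and depicted in Figure~\ref{fig:symmetry}. First I would set up the probabilistic framework in exact parallel to the latency case: for each data-center $j$ I introduce a random variable $Y_j$ measuring the staleness of the value a read returns there, so that the read is $t_c$-fresh at data-center $j$ precisely when $Y_j \leq t_c$, and I let $g_j(t) = Pr[Y_j \leq t]$ be its CDF. With this encoding, $\pc^j = Pr[Y_j > t_c^j]$ plays exactly the role that $\pa^j = Pr[X_j > t_a^j]$ played in the latency proof, and $g_j$ is non-decreasing just as $f_j$ was. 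The content of the symmetry observation is that although a freshness bound is a deadline in the past rather than in the future, the event ``the read misses the freshness bound $t_c$'' is monotone in $t_c$ in the same direction that ``the read misses the latency bound $t_a$'' is monotone in $t_a$: enlarging the tolerance lowers the miss probability.

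Next I would translate the two composition semantics into events on the $Y_j$. For \texttt{QUICKEST}, the composed read can return the freshest value available across the data-centers, so it is $t_c$-stale only when every data-center is stale; hence $\pc^c(t) = Pr[\cap_j \{Y_j > t\}] = \Pi_j Pr[Y_j > t]$ under the independence-across-data-centers assumption, which specializes to $\pc^c(t) = \Pi_j \pc^j$ in the aligned case $\forall j,\, t_c^j = t$. For \texttt{ALL}, the composed read is fresh only when all data-centers are fresh, so a freshness miss is the union event, giving $\pc^c(t) = 1 - \Pi_j Pr[Y_j \leq t] = 1 - \Pi_j (1 - \pc^j)$ in the aligned case. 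These are literally the \texttt{QUICKEST} and \texttt{ALL} expressions from the proof of Theorem~\ref{thm:latency-comp} with $X_j$ replaced by $Y_j$ and $t_a$ replaced by $t_c$.

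The bounds for the general, non-aligned case then follow from the same monotonicity argument carried over verbatim. Writing $t_c^i = \min_j t_c^j$ and $t_c^k = \max_j t_c^j$, the fact that each $g_j$ is non-decreasing gives $Pr[Y_j > t_c^i] \geq Pr[Y_j > t_c^j] \geq Pr[Y_j > t_c^k]$ for every $j$; multiplying over $j$ and identifying the product with $\pc^c$ evaluated at the extreme deadlines yields Equation~\ref{eq:con-comp-q} for \texttt{QUICKEST}, and the analogous manipulation of $1 - \Pi_j(1-\cdot)$ yields Equation~\ref{eq:con-comp-a} for \texttt{ALL}. In both cases the squeeze $\min_j t_c^j \leq t_c^c \leq \max_j t_c^j$ is immediate from the same CDF monotonicity.

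I expect the only genuine obstacle to be making the symmetry rigorous rather than merely pictorial: exhibiting a staleness variable $Y_j$ whose miss event $\{Y_j > t_c\}$ has the correct monotonicity, so that the substitution $\pa \mapsto \pc$, $t_a \mapsto t_c$ is a legitimate isomorphism of the two problems and not just a notational swap. Once $Y_j$ and its non-decreasing CDF are in hand, the remainder is a direct transcription of Theorem~\ref{thm:latency-comp} and introduces no new ideas.
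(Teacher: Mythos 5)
Your proposal is correct and follows essentially the same route as the paper: the paper justifies Theorem~\ref{thm:con-comp} purely by the time-symmetry between $t$-latency and $t$-freshness (Figure~\ref{fig:symmetry}), asserting that the rules are obtained by substituting $\pc$ for $\pa$ and $t_c$ for $t_a$ in Theorem~\ref{thm:latency-comp}, and gives no further proof. Your introduction of an explicit staleness random variable $Y_j$ with non-decreasing CDF to make that substitution rigorous is a faithful (and somewhat more careful) rendering of exactly the argument the paper intends.
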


\subsubsection{Composing consistency/latency model with a WAN partition model}
\label{sec:wan-ac-comp}
All data-centers are connected to each other through a wide-area-network (WAN). We assume the WAN follows a partition model $(t_p^G, \alpha^G)$. This indicates that $\alpha^G$ fraction of messages passing through the WAN suffers a delay $> t_p^G$. Note that the WAN partition model is distinct from the per data-center partition model (Definition~\ref{def:p-partition}). 
Let $X$ denote the latency in a remote data-center, and $Y$ denote the WAN latency of a link connecting the local data-center to this remote data-center (with latency $X$). Then the total latency of the path to the remote data-center is $X+Y$.\footnote{We ignore the latency of the local data-center in this rule, since the local data-center latency is used in the latency composition rule (Section~\ref{sec:lat-comp}).}

\begin{equation}
\label{eq:wan-comp}
	Pr[X+Y\geq t_a+t_p^G] \geq (Pr[X\geq t_a]\cdot Pr[Y\geq t_p^G]) = \pa\cdot\alpha^G.
\end{equation}

Here we assume the WAN latency, and data-center latency distributions are independent. Note that Equation~\ref{eq:wan-comp} gives a lower bound of the probability. In practice we can estimate the probability by sampling both $X$ and $Y$, and estimating the number of times $(X+Y)$ exceeds $(t_a+t_p^G)$.

\subsection{Example}

The example in Figure~\ref{fig:comp-example} shows the composition rules in action.  In this example, there is one local data-center and 2 replica data-centers. Each data-center can hold multiple replicas of a data-item. First we compose each replica data-center latency model with the WAN partition model. Second we take the WAN-latency composed models for each data-center and compose them using the QUICKEST rule (Figure~\ref{table:composition}, bottom part).


\begin{figure}[htbp]
	\centering
		\includegraphics[width=0.85\textwidth]{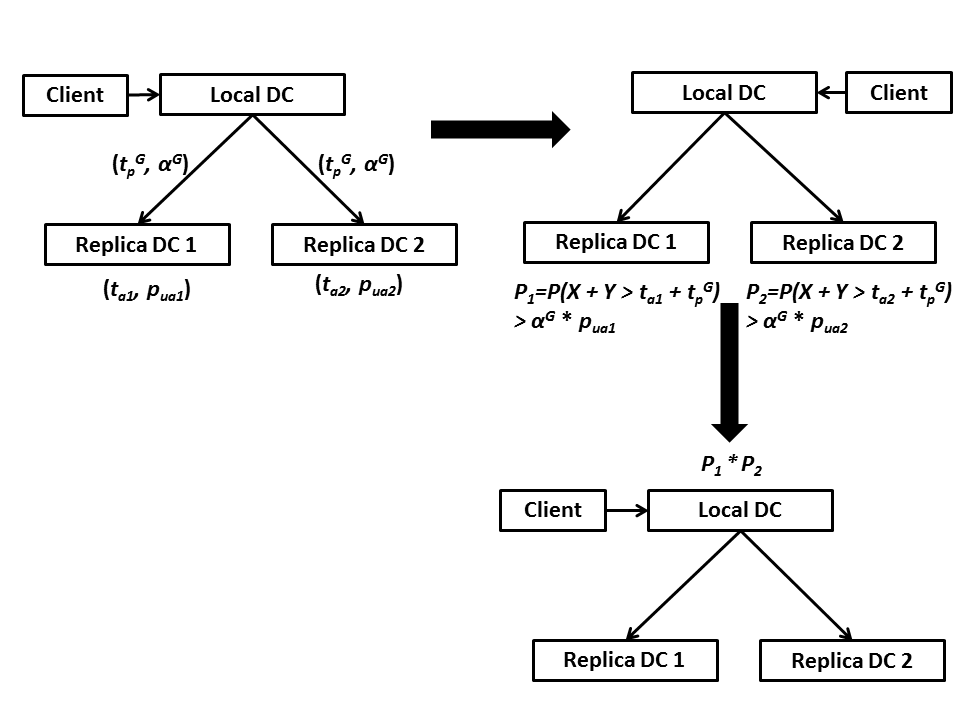}
	\caption{\it Example of composition rules in action.}
	\label{fig:comp-example}
\end{figure}

\subsection{GeoPCAP Control Knob}

We use a similar delay knob to meet the SLAs in a geo-distributed setting. We call this the \emph{geo-delay} knob and denote it as $\Delta$. The time delay $\Delta$ is the delay added at the local data-center to a read request received from a client before it is forwarded to the replica data-centers. $\Delta$ affects the consistency-latency trade-off in a manner similar to the read delay knob in a data-center (Section~\ref{knob}).  Increasing the knob tightens the deadline at each replica data-center, thus increasing per data-center latency ($\pa$). Similar to read delay (Figure~\ref{table:knobs}), increasing the geo delay knob improves consistency, since it gives each data-center time to commit latest writes. 

\subsection{GeoPCAP Control Loop}

Our GeoPCAP system uses a control loop depicted in Figure~\ref{alg:gadapt} for the Consistency SLA case using the \texttt{QUICKEST} composition rule. The control loops for the other three combinations (Consistency-\texttt{QUICKEST}, Latency-\texttt{ALL}, Latency-\texttt{QUICKEST}) are similar.

\begin{figure}
\begin{algorithmic}[1]
\Procedure 
{control}{$\mathcal{SLA}=<\pc^{sla},t_c^{sla},t_a^{sla}>$}
	\State Geo-delay $\Delta := 0$ 
	\State $E := 0$, $Error_{old} := 0$
	\State set $k_p$, $k_d$, $k_i$ for PID control (tuning)
	\State Let $(t_p^G,\alpha^G)$ be the WAN partition model
	\While{(true)}
		\For {each data-center $i$}
			\State Let $F_i$ denote the random freshness interval at $i$
			\State Let $L_i$ denote the random operation latency at $i$
			\State Let $W_i$ denote the WAN latency of the link to $i$
			\State Estimate $ \pc^i := Pr[F_i + W_i > t_c^{sla} + t_p^G + \Delta] $ // WAN composition (Section~\ref{sec:wan-ac-comp})
			\State Estimate $ \pa^i := Pr[L_i + W_i > t_a + t_p^G  - \Delta = t_a^{sla}]$
		\EndFor
		\State $\pc^c := \Pi_i\pc^i$, $\pa^c := \Pi_i \pa^i$ // Consistency/Latency composition (Sections~\ref{sec:lat-comp}~\ref{sec:con-comp})
		\State $Error := \pc^c - \pc^{sla}$
		\State $dE := Error - Error_{old}$
		\State $E := E + Error$
		\State $u := k_p \cdot Error + k_d \cdot dE + k_i \cdot E$
		\State $\Delta := \Delta + u$;
	\EndWhile	
\EndProcedure
\end{algorithmic}
\caption{\it Adaptive Control Loop for GeoPCAP Consistency SLA (\texttt{QUICKEST} Composition).}
\label{alg:gadapt}
\end{figure}

Initially, we opted to use the single data-center multiplicative control loop (Section~\ref{control}) for GeoPCAP. However, the multiplicative approach led to increased oscillations for the composed consistency ($\pc$) and latency ($\pa$) metrics in a geo-distributed setting. The multiplicative approach sufficed for the single data-center PCAP system, since the oscillations were bounded in steady-state. However, the increased oscillations in a geo-distributed setting prompted us to use a control theoretic approach for GeoCAP.

As a result, we use a PID control theory approach~\cite{AstHag95} for the GeoPCAP controller. The controller runs an infinite loop, so that it can react to network delay changes and meet SLAs. There is a tunable sleep time at the end of each iteration (1~sec in Section~\ref{sec:geo-pcap-eval} simulations). Initially the geo-delay $\Delta$ is set to zero. At each iteration of the loop, we use the composition rules to estimate $\pc^c(t)$, where $t = t_c^{sla} + t_p^G - \Delta$. We also keep track of composed $\pa^c()$ values. We then compute the error, as the difference between current composed $\pc$ and the SLA. Finally the geo-delay change is computed using the PID control law~\cite{AstHag95} as follows:

\begin{equation}
\label{eq:pid}
u = k_p\cdot Error(t) + k_d\cdot \frac{dError(t)}{dt} + k_i \cdot \int{Error(t)dt}
\end{equation}

Here, $k_p$, $k_d$, $k_i$ represent the proportional, differential, and integral gain factors for the PID controller respectively. There is a vast amount of literature on tuning these gain factors for different control systems~\cite{AstHag95}. Later in our experiments, we discuss how we set these factors to get SLA convergence. Finally at the end of the iteration, we increase $\Delta$ by $u$. Note that $u$ could be negative, if the metric is less than the SLA.

Note that for the single data-center PCAP system, we used a multiplicative control loop (Section~\ref{control}), which outperformed the unit step size policy. For GeoPCAP, we employ a PID control approach. PID is preferable to the multiplicative approach, since it guarantees fast convergence, and can reduce oscillation to arbitrarily small amounts. However PID's stability depends on proper tuning of the gain factors, which can result in high management overhead. On the other hand the multiplicative control loop has a single tuning factor (the multiplicative factor), so it is easier to manage.  Later in Section~\ref{sec:geo-pcap-eval} we experimentally compare the PID and multiplicative control approaches.




\section{Experiments}
\label{exp}

\subsection{Implementation Details}






In this section, we discuss how support for our consistency and latency SLAs can be easily incorporated into the Cassandra and Riak key-value stores (in a single data-center) via minimal changes.  
\subsubsection{PCAP Coordinator}

From Section~\ref{control}, recall that the PCAP Coordinator runs an infinite loop that continuously injects operations, collects logs ($k=100$ operations by default), calculates metrics, and changes the control knob. We implemented a modular PCAP Coordinator using Python  (around 100 LOC), which can be connected to any key-value store. 

We integrated PCAP into two popular NoSQL stores: Apache Cassandra~\cite{cassandra} and Riak~\cite{riak} -- each of these required changes to about 50 lines of original store code.



\subsubsection{Apache Cassandra}


First, we modified the Cassandra v1.2.4 to add read delay and read repair rate as control knobs. 
We changed the Cassandra Thrift interface so that it accepts read delay as an additional parameter. Incorporating the read delay into the read path required around 50 lines of Java code. 

Read repair rate is specified as a column family configuration parameter, and thus did not require any code changes. We used YCSB's Cassandra connector as the client, modified appropriately to talk with the clients and the PCAP Coordinator. 


\subsubsection{Riak}

We modified Riak v1.4.2 to add read delay and read repair as control knobs. Due to the unavailability of a YCSB Riak connector, we wrote a separate YCSB client for Riak from scratch (250 lines of Java code). We decided to use YCSB instead of existing Riak clients, since YCSB offers flexible workload choices that model real world key-value store workloads.
 
We introduced a new system-wide parameter for read delay, which was passed via the Riak http interface to the Riak coordinator which in turn applied it to all queries that it receives from clients. This required about 50 lines of Erlang code in Riak. Like Cassandra, Riak also has built-in support for controlling read repair rate.

\vspace*{-0.2cm}
\subsection{Experiment Setup}
\vspace*{-0.1cm}

Our experiments are in three stages: microbenchmarks for a single data-center (Section~\ref{sec:expt_micro}) and deployment experiments for a single data-center (Section~\ref{sec:expt_deployment}), and a realistic simulation for the geo-distributed setting (Section~\ref{sec:geo-pcap-eval}). We first discuss the experiments for a single data-center setting.

Our single data-center PCAP Cassandra system and our PCAP Riak system were each run with their default settings. We used YCSB v 0.1.4~\cite{ycsb1} to send operations to the store. { YCSB generates synthetic workloads for key-value stores and models real-world workload scenarios (e.g., Facebook photo storage workload). It has been used to benchmark many open-source and commercial key-value stores, and is the de facto benchmark for key-value stores~\cite{ycsb}.}

Each YCSB experiment consisted of a load phase, followed by a work phase.
Unless otherwise specified, we used the following YCSB parameters:
16 threads per YCSB instance, 2048~B values, and a read-heavy distribution (80\% reads). We had as many YCSB instances as the cluster size, one co-located at each server. The default key size was 10~B for Cassandra, and Riak. Both YCSB-Cassandra and YCSB-Riak connectors were used with the weakest quorum settings and 3 replicas per key. The default throughput was 1000 ops/s. All operations use a consistency level of \texttt{ONE}. 

Both PCAP systems were run in a cluster of 9 d710 Emulab servers \cite{emulab}, each with 4 core Xeon processors, 12~GB RAM, and 500~GB disks. 
The default network topology was a LAN (star topology), with 100~Mbps bandwidth and inter-server { round-trip} delay of 20~ms,  dynamically controlled using traffic shaping.




We used NTP to synchronize clocks within 1~ms. This is reasonable since we are limited to a single data-center. This clock skew can be made tighter by using atomic or GPS clocks~\cite{goo:span}. This synchronization is needed by the PCAP coordinator to compute the SLA metrics. 

\subsection{Microbenchmark Experiments (Single Data-center)}
\label{sec:expt_micro}


\subsubsection{Impact of Control Knobs on Consistency}

\label{sec:expt_knobs}
\begin{figure}[htbp]
	\centering
		\includegraphics[width=0.4\textwidth]{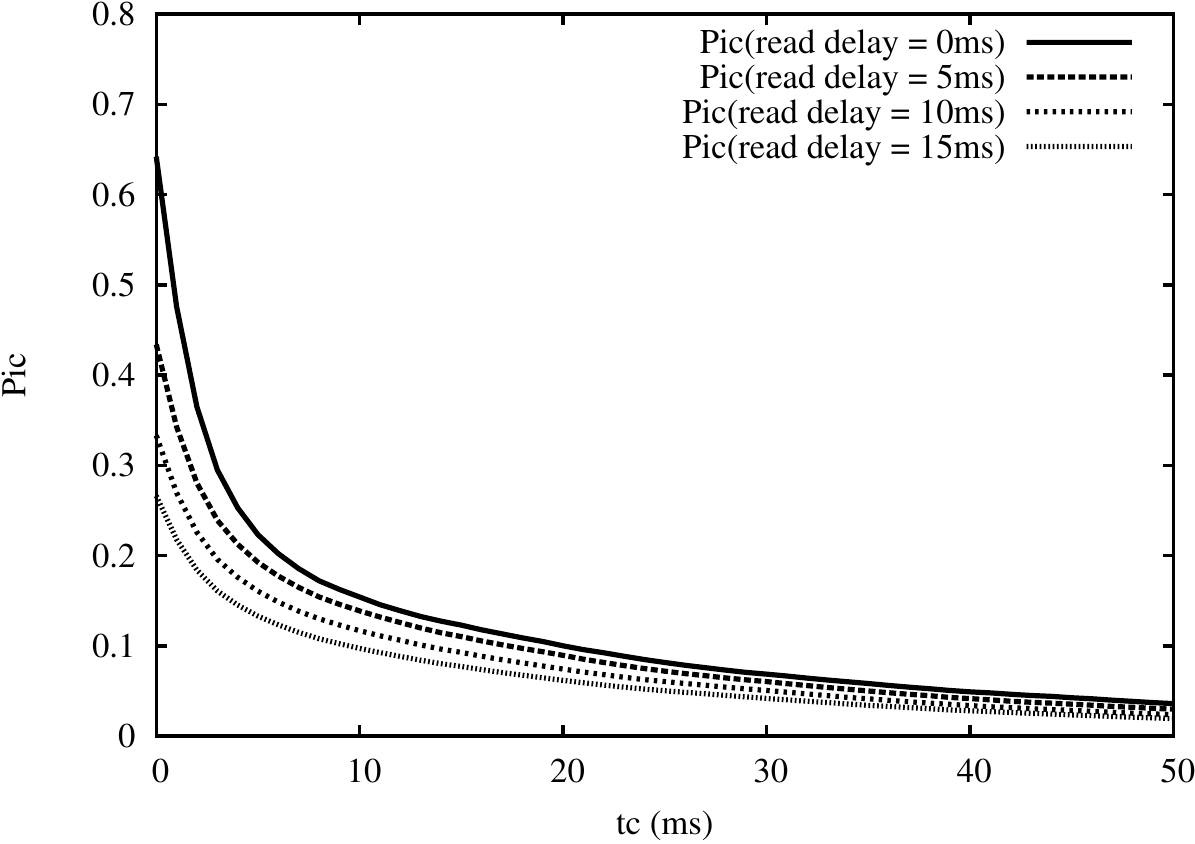}
		\vspace{-2mm}
	\caption{\it Effectiveness of Read Delay knob in PCAP Cassandra. Read repair rate fixed at 0.1.}
	\label{fig:pctcrdelay}
\end{figure}

We study the impact of two control knobs on consistency: read delay and read repair rate. 

Fig.~\ref{fig:pctcrdelay} shows the inconsistency metric $\pc$ against $t_c$ for different read delays. This shows that when applications desire fresher data (left half of the plot), read delay is flexible knob to control inconsistency $\pc$. When the freshness requirements are lax (right half of plot), the knob is less useful. However, $\pc$ is already low in this region.

On the other hand, read repair rate has a relatively smaller effect. We found that a change in read repair rate from 0.1 to 1 altered $\pc$ by only 15\%, whereas Fig.~\ref{fig:pctcrdelay} showed that a 15~ms increase in read delay (at $t_c=0~ms$) lowered inconsistency by over 50\%. As mentioned earlier, using read repair rate requires calculating $\pc$ over logs of at least $k=3000$ operations, whereas read delay worked well with $k=100$. Henceforth, by default we use read delay as our sole control knob.

\subsubsection{PCAP vs.~PBS}

\begin{figure}[htbp]
	\centering
		\includegraphics[width=0.4\textwidth]{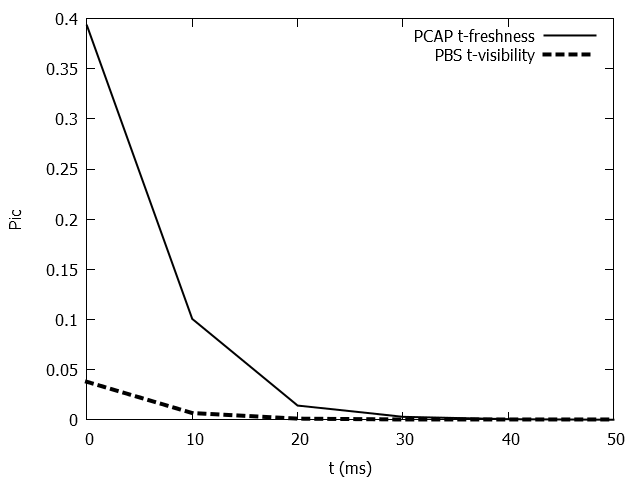}
	\caption{\it $\pc$ PCAP vs.~PBS consistency metrics. Read repair rate set to 0.1, 50\% writes.}
	\label{fig:pbs}
\end{figure}

Fig.~\ref{fig:pbs} compares, for a 50\%-write workload, the probability of inconsistency against $t$ for both existing work PBS ($t$-visibility)~\cite{pbsj} and PCAP ($t$-freshness)  described in Section~\ref{model} We observe that PBS's reported inconsistency is lower compared to PCAP. This is because, PBS considers a read that returns the value of an in-flight write (overlapping read and write) to be always fresh, by default. However the comparison between PBS and PCAP metrics is not completely fair, since the PBS metric is defined in terms of write operation end times, whereas our PCAP metric is based on write start times.  It should be noted that the purpose of this experiment is not to show which metric captures client-centric consistency better. Rather, our goal is to demonstrate that our PCAP system can be made to run by using PBS $t$-visibility metric instead of PCAP $t$-freshness.

\vspace*{-0.2cm}
\subsubsection{PCAP Metric Computation Time}
\label{sec:expt_calc}

Fig.~\ref{fig:computetime} shows the total time for the PCAP Coordinator to calculate $\pc$ and $\pa$ metrics for  values of $k$ from 100 to 10K, and using multiple threads. We observe low computation times of around 1.5 s, except when there are 64 threads and a 10K-sized log: under this situation, the system starts to degrade as too many threads contend for relatively few memory resources. Henceforth, the PCAP Coordinator by default uses a log size of $k=100$  operations and 16 threads.

\begin{figure}[htbp]
	\centering
		\includegraphics[width=0.4\textwidth]{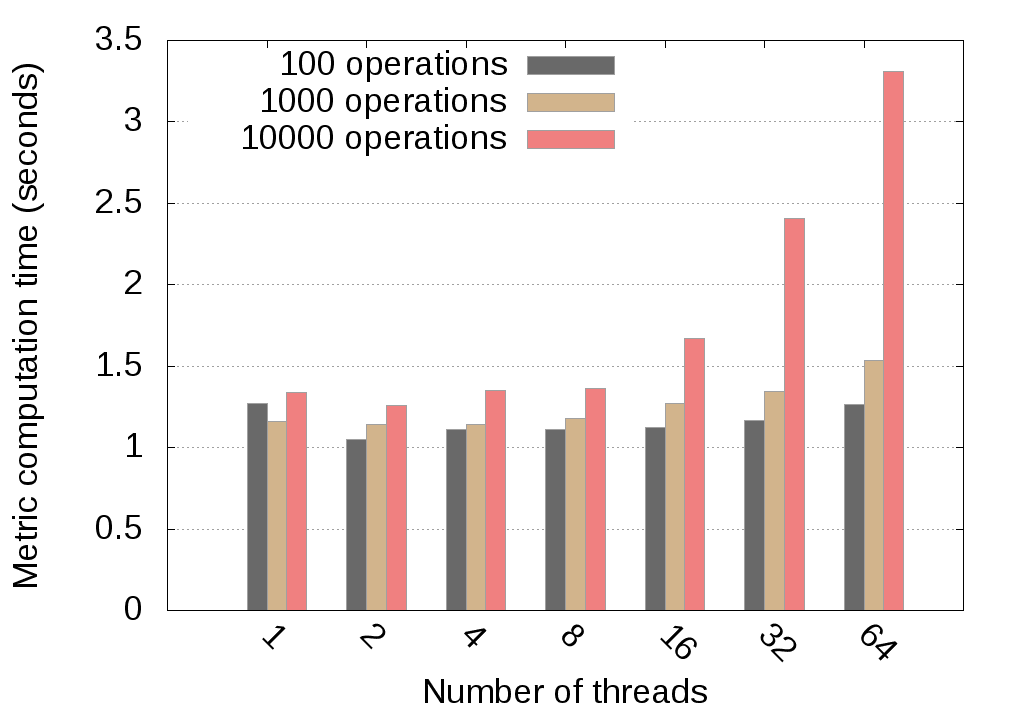}
	\caption{\it PCAP Coordinator time taken to both collect logs and compute $\pc$ and $\pa$ in PCAP Cassandra.}
	\label{fig:computetime}
\end{figure}


\subsection{Deployment Experiments}
\label{sec:expt_deployment}


We now subject our two PCAP systems to { network delay variations} and YCSB query workloads. In particular, we present two types of experiments: 1)  {\it sharp network jump} experiments, where the network delay at some of the servers changes suddenly, and 2) {\it lognormal} experiments, which inject continuously-changing and realistic delays into the network. Our experiments use $\epsilon\le0.05$ (Section~\ref{control}). 


Fig.~\ref{table:experiments} summarizes the various of SLA parameters and network conditions used in our experiments. 



\begin{figure*}[t]
{\small
\begin{center}
\begin{tabular}{|l|c|c|c|r|} \hline
System & SLA & Parameters & Delay Model & Plot\\ \hline \hline
Riak & Latency  & $\pa=0.2375$, $t_a=150~ms$, $t_c=0~ms$ & Sharp delay jump	& Fig.~\ref{fig:s_a_riak_j}\\ \hline
Riak & Consistency  & $\pc=0.17$, $t_c=0~ms$, $t_a=150~ms$ & Lognormal	& Fig.~\ref{fig:t_c_riak_ln}\\ \hline
Cassandra & Latency & $\pa=0.2375$, $t_a=150~ms$, $t_c=0~ms$ & Sharp delay jump	& Figs.~\ref{fig:t_a_cass_j},~\ref{fig:s_a_cass_j},~\ref{fig:ssc_a_cass_j}\\ \hline
Cassandra & Consistency & $\pc=0.15$, $t_c=0~ms$, $t_a=150~ms$ & Sharp delay jump	& Fig.~\ref{fig:s_c_cass_j}\\ \hline
Cassandra & Consistency & $\pc=0.135$, $t_c=0~ms$, $t_a=200~ms$ & Lognormal	& Figs.~\ref{fig:t_c_cass_ln},~\ref{fig:s_c_cass_ln},~\ref{fig:s_c_cass_ln_24},~\ref{fig:d_c_cass}\\ \hline
Cassandra & Consistency & $\pc=0.2$, $t_c=0~ms$, $t_a=200~ms$ & Lognormal	& Fig.~\ref{fig:passive}\\ \hline
Cassandra & Consistency & $\pc=0.125$, $t_c=0~ms$, $t_a=25~ms$ & Lognormal & Figs.~\ref{fig:low-timeline},~\ref{fig:low-scatter}\\
\hline
\end{tabular}
\end{center}
}
\caption{\it  {Deployment Experiments: Summary of Settings and Parameters.}}
\label{table:experiments}
\end{figure*}


\subsubsection{Latency SLA under Sharp Network Jump}
\label{sec:expt_sharp_av}

Fig.~\ref{fig:t_a_cass_j} shows the timeline of a scenario for PCAP Cassandra using the following latency SLA: $\pa^{sla}=0.2375$, $t_c=0$~ms, $t_a=150$~ms. 



In the initial segment of this run ($t=0$~s to $t=800$~s) the network delays are small; the one-way server-to-LAN switch delay is 10~ms (this is half the machine to machine delay, where a machine can be either a client or a server).  After the warm up phase, by $t=400$~s, Fig.~\ref{fig:t_a_cass_j} shows that $\pa$ has converged to the target SLA. Inconsistency $\pc$ stays close to zero. 

We wish to measure how close the PCAP system is to the optimal-achievable envelope (Section~\ref{s_pcap}). The envelope captures the lowest possible values for consistency ($\pc$, $t_c$), and latency ($\pa$, $t_a$), allowed by the network partition model ($\alpha$, $t_p$) (Theorem~\ref{thm:CAP-p}). We do this by first calculating $\alpha$ for our specific network, then calculating the optimal achievable non-SLA metric, and finally seeing how close our non-SLA metric is to this optimal. 

First, from Theorem~\ref{thm:CAP2} we know that the achievability region requires $t_c + t_a \geq t_p$; hence, we set $t_p = t_c+t_a$. Based on this, and the probability distribution of delays in the network, we calculate analytically the exact value of $\alpha$ as the fraction of client pairs whose propagation delay exceeds $t_p$ (see Definition~\ref{def:p-partition}). 

Given this value of $\alpha$ at time $t$, we can calculate the optimal value of $\pc$ as $\pc(opt)=\max(0,\alpha-\pa)$. Fig.~\ref{fig:t_a_cass_j} shows that in the initial part of the plot (until $t=800$~s), the value of $\alpha$ is close to 0, and the $\pc$ achieved by PCAP Cassandra is close to optimal.


At time $t=800$~s in Fig.~\ref{fig:t_a_cass_j}, we sharply increase the one-way server-to-LAN delay for 5 out of 9 servers from 10~ms to 26~ms. This sharp network jump results in a lossier network, as shown by the value of $\alpha$ going up from 0 to 0.42. As a result, the value of $\pa$ initially spikes -- however, the PCAP system adapts, and by time $t=1200$~s the value of $\pa$ has converged back to under the SLA. 

However, the high value of $\alpha (=0.42)$ implies that the optimal-achievable $\pc(opt)$ is also higher after $t=800~s$. Once again we notice that $\pc$ converges in the second segment of  Fig.~\ref{fig:t_a_cass_j} by $t=1200$~s. 

To visualize how close the PCAP system is to the optimal-achievable envelope, Fig.~\ref{fig:s_a_cass_j} shows the two achievable envelopes as piecewise linear segments (named ``before jump" and ``after jump'') and the $(\pa, \pc)$ data points from our run in Fig.~\ref{fig:t_a_cass_j}. The figure annotates the clusters of data points by their time interval. We observe that in the stable states both before the jump (dark circles) and after the jump (empty triangles) are close to their optimal-achievable envelopes.

\begin{figure}[!htb]
    \centering
    \begin{minipage}{.5\textwidth}
        \centering
        \includegraphics[width=0.9\linewidth, height=0.25\textheight]{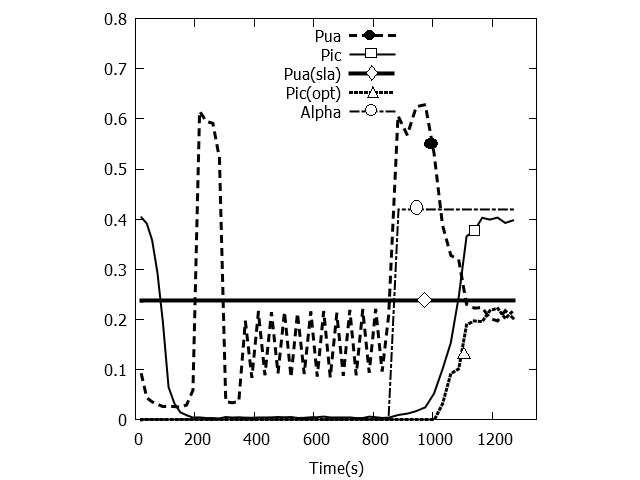}
        \caption{\it Latency SLA with PCAP Cassandra \\ under Sharp Network Jump at 800 s: Timeline.}
        \label{fig:t_a_cass_j}
    \end{minipage}%
    \begin{minipage}{0.5\textwidth}
        \centering
        \includegraphics[width=0.9\linewidth, height=0.25\textheight]{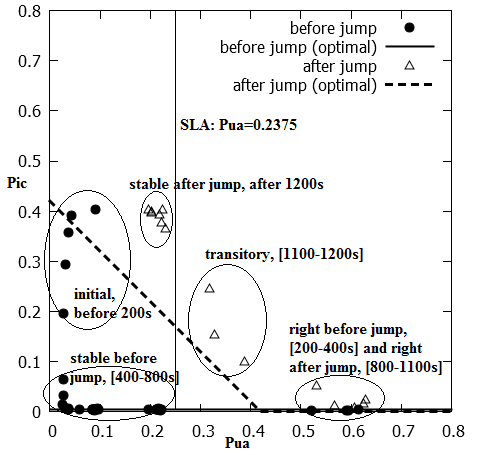}
        \caption{\it Latency SLA with PCAP Cassandra \\ under Sharp Network Jump: Consistency-Latency Scatter plot.}
        \label{fig:s_a_cass_j}
    \end{minipage}
\end{figure}


\begin{figure}[htbp]
	\centering
		\includegraphics[width=0.45\textwidth]{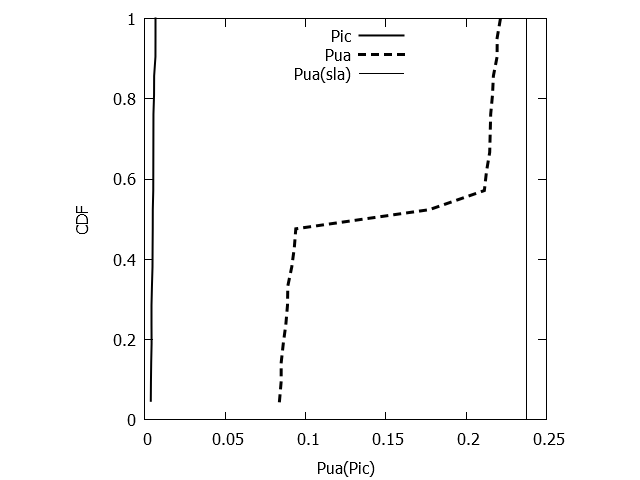}
	\caption{\it Latency SLA with PCAP Cassandra under Sharp Network Jump: Steady State CDF [400~s, 800~s].}
	\label{fig:ssc_a_cass_j}
\end{figure}

Fig.~\ref{fig:ssc_a_cass_j} shows the CDF plot for $\pa$ and $\pc$ in the steady state time interval [400~s, 800~s] of Fig.~\ref{fig:t_a_cass_j}, corresponding to the bottom left cluster from Fig.~\ref{fig:s_a_cass_j}. We observe that $\pa$ is always below the SLA.

Fig.~\ref{fig:s_a_riak_j} shows a scatter plot for our PCAP Riak system under a latency SLA ($\pa^{sla}=0.2375$, $t_a=150~ms$, $t_c=0~ms$).   The sharp network jump occurs at time $t=4300$~s when we increase the one-way server-to-LAN delay for 4 out of the 9 Riak nodes from 10~ms to 26~ms. It takes about 1200~s for $\pa$ to converge to the SLA (at around $t=1400$~s in the warm up segment and $t=5500$~s in the second segment). 



\subsubsection{Consistency SLA under Sharp Network Jump}
\label{sec:expt_sharp_cons}


\begin{figure}[!htb]
    \centering
    \begin{minipage}{.5\textwidth}
        \centering
        \includegraphics[width=0.9\linewidth, height=0.25\textheight]{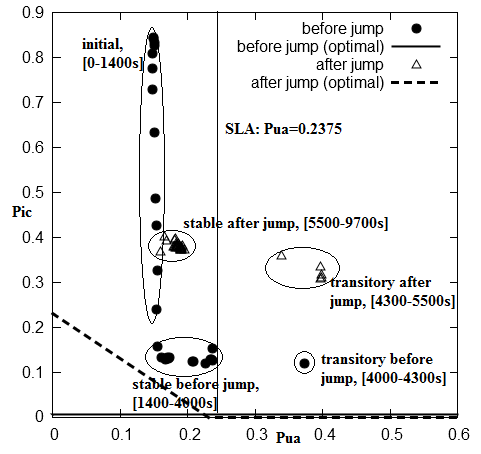}
        \caption{\it Latency SLA with PCAP \\Riak under Sharp Network Jump: \\Consistency-Latency Scatter plot..}
        \label{fig:s_a_riak_j}
    \end{minipage}%
    \begin{minipage}{0.5\textwidth}
        \centering
        \includegraphics[width=0.9\linewidth, height=0.25\textheight]{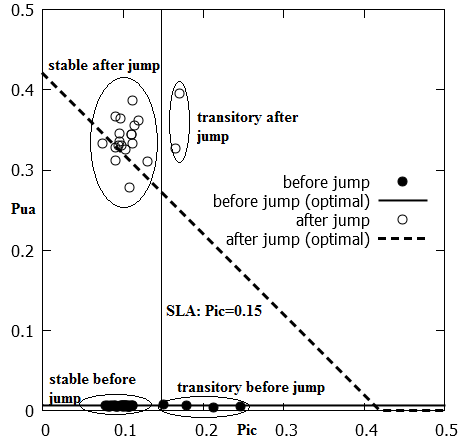}
        \caption{\it Consistency SLA with PCAP \\Cassandra under Sharp Network Jump: \\Consistency-Latency Scatter plot.}
        \label{fig:s_c_cass_j}
    \end{minipage}
\end{figure}


We present consistency SLA results for PCAP Cassandra (PCAP Riak results are similar and are omitted). We use $\pc^{sla}=0.15$, $t_c=0~ms$, $t_a=150~ms$. The initial  one-way server-to-LAN delay is 10~ms. At time 750~s, we increase the one-way server-to-LAN delay for 5 out of 9 nodes to 14~ms. This changes $\alpha$ from 0 to 0.42. 

Fig.~\ref{fig:s_c_cass_j} shows the scatter plot. First, observe that the PCAP system meets the consistency SLA requirements, both before and after the jump. Second, as network conditions worsen, the optimal-achievable envelope moves significantly. Yet the PCAP system remains close to the optimal-achievable envelope. The convergence time is about 100~s, both before and after the jump.

\vspace*{-0.2cm}
\subsubsection{Experiments with Realistic Delay Distributions}
\label{sec:expt_lognormal}


This section evaluates the behavior of PCAP Cassandra and PCAP Riak under continuously-changing network conditions and a consistency SLA (latency SLA experiments yielded similar results and are omitted). 


\begin{figure}[!htb]
    \centering
    \begin{minipage}{.5\textwidth}
        \centering
        \includegraphics[width=0.9\linewidth, height=0.25\textheight]{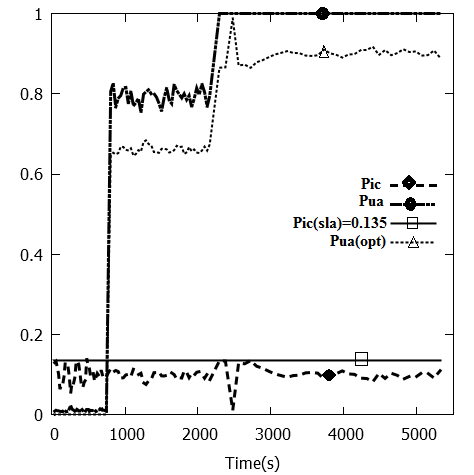}
        \caption{\it Consistency SLA with PCAP \\Cassandra under Lognormal delay distribution:\\Timeline.}
        \label{fig:t_c_cass_ln}
    \end{minipage}%
    \begin{minipage}{0.5\textwidth}
        \centering
        \includegraphics[width=0.9\linewidth, height=0.25\textheight]{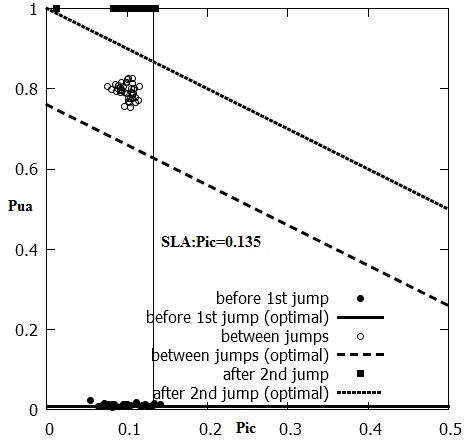}
        \caption{\it Consistency SLA with PCAP \\Cassandra under Lognormal delay distribution:\\Consistency-Latency Scatter plot.}
        \label{fig:s_c_cass_ln}
    \end{minipage}
\end{figure}

Based on studies for enterprise data-centers~\cite{BAM10} we use a lognormal distribution for injecting packet delays into the network. We modified the Linux traffic shaper to add lognormally distributed delays to each packet. Fig.~\ref{fig:t_c_cass_ln} shows a timeline where initially ($t=0$ to 800~s) the delays are lognormally distributed, with the underlying normal distributions of $\mu=3$~ms and $\sigma=0.3$~ms. At $t=800$~s we increase $\mu$ and $\sigma$ to $4$~ms and $0.4$~ms respectively. Finally at around 2100~s, $\mu$ and $\sigma$ become $5$~ms and $0.5$~ms respectively. Fig.~\ref{fig:s_c_cass_ln} shows the corresponding scatter plot. We observe that in all three time segments, the inconsistency metric $\pc$: i) stays below the SLA, and ii) upon a sudden network change converges back to the SLA. Additionally, we observe that $\pa$ converges close to its optimal achievable value. 

Fig.~\ref{fig:t_c_riak_ln} shows the effect of worsening network conditions on PCAP Riak. At around $t=1300$~s we increase $\mu$ from 1~ms to 4~ms, and $\sigma$ from 0.1~ms to 0.5~ms. The plot shows that it takes PCAP Riak an additional 1300~s to have inconsistency $\pc$ converge to the SLA. Further the non-SLA metric $\pa$ converges close to the optimal.  
\begin{figure}[htbp]
	\centering
		\includegraphics[width=0.5\textwidth]{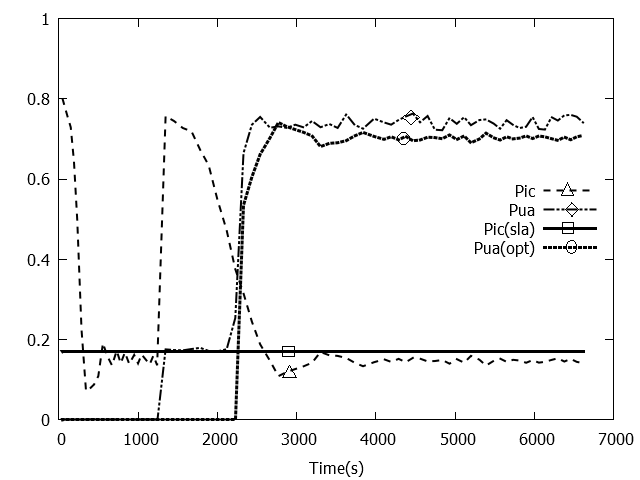}
	\caption{\it Consistency SLA with PCAP Riak under Lognormal delay distribution: Timeline.}
	\label{fig:t_c_riak_ln}
\end{figure}

So far all of our experiments used lax timeliness requirements ($t_a=150~ms,200~ms$), and were run on top of relatively high delay networks. Next we perform a stringent consistency SLA experiment ($t_c=0~ms,\pc=.125$) with a very tight latency timeliness requirement ($t_a=25~ms$). Packet delays are still lognormally distributed, but with lower values. Fig.~\ref{fig:low-timeline} shows a timeline where initially the delays are lognormally distributed with $\mu=1~ms$, $\sigma=0.1~ms$. At time $t=160~s$ we increase $\mu$ and $\sigma$ to $1.5~ms$ and $0.15~ms$ respectively. Then at time $t=320~s$, we decrease $\mu$ and $\sigma$ to return to the initial network conditions. We observe that in all three time segments, $\pc$ stays below the SLA, and quickly converges back to the SLA after a network change. Since the network delays are very low throughout the experiment, $\alpha$ is always 0. Thus the optimal $\pa$ is also 0. We observe that $\pa$ converges very close to optimal before the first jump and after the second jump ($\mu=1~ms,\sigma=0.1~ms$). In the middle time segment ($t=160$ to $320~s$), $\pa$ degrades in order to meet the consistency SLA under slightly higher packet delays. Fig.~\ref{fig:low-scatter} shows the corresponding scatter plot. We observe that the system is close to the optimal envelope in the first and last time segments, and the SLA is always met. We note that we are far from optimal in the middle time segment, when the network delays are slightly higher. This shows that when the network conditions are relatively good, the PCAP system is close to the optimal envelope, but when situations worsen we move away. The gap between the system performance and the envelope indicates that the bound (Theorem~\ref{thm:CAP-p}) could be improved further. We leave this as an open question.


\begin{figure}[!htb]
    \centering
    \begin{minipage}{.5\textwidth}
        \centering
        \includegraphics[width=0.9\linewidth, height=0.3\textheight]{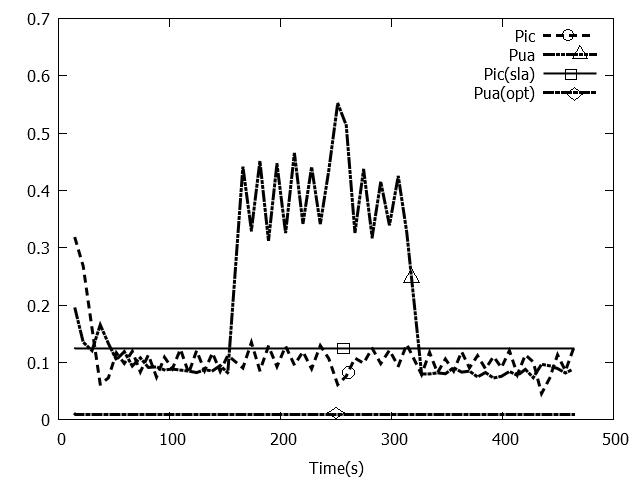}
        \caption{\it Consistency SLA with PCAP \\Cassandra under Lognormal delay: \\Timeline ($t_c=0~ms,\pc=0.125,t_a=25~ms$).}
        \label{fig:low-timeline}
    \end{minipage}%
    \begin{minipage}{0.5\textwidth}
        \centering
        \includegraphics[width=0.9\linewidth, height=0.3\textheight]{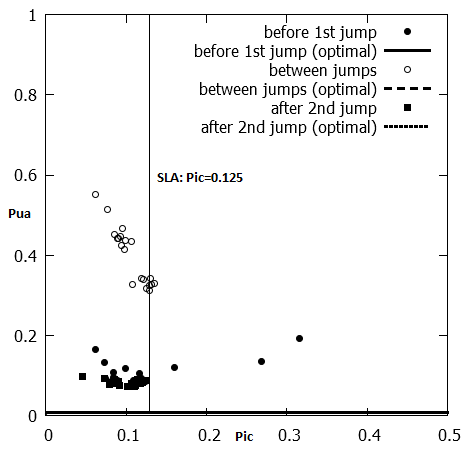}
        \caption{\it Consistency SLA with PCAP \\Cassandra under Lognormal delay: \\Scatter Plot ($t_c=0~ms,\pc=0.125,t_a=25~ms$).}
        \label{fig:low-scatter}
    \end{minipage}
\end{figure}


\subsubsection{Effect of Read Repair Rate Knob}
\vspace*{-0.1cm}
\label{sec:expt_readrepair}
\begin{figure}[htbp]
	\centering
		\includegraphics[width=0.40\textwidth]{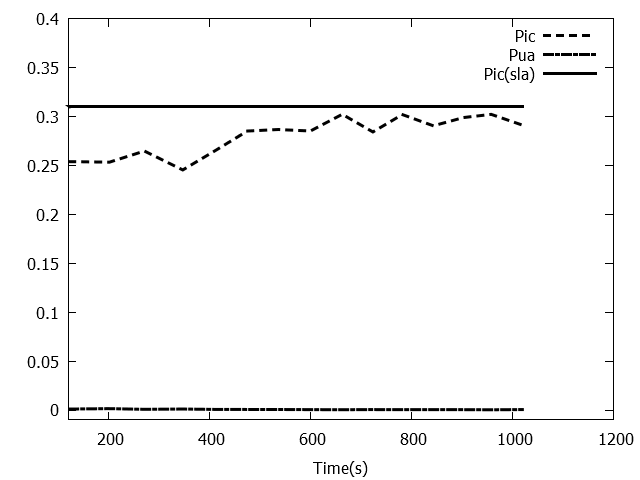}
	\caption{\it Effect of Read Repair Rate on PCAP Cassandra. $\pc=0.31$, $t_c=0~ms$, $t_a=100~ms$.}
	\label{fig:t_c_cass_rr}
\end{figure}

All of our deployment experiments  use read delay as the only control knob. Fig.~\ref{fig:t_c_cass_rr} shows a portion of a run when only read repair rate was used by our PCAP Cassandra system. This was because read delay was already zero, and we needed to push $\pc$ up to $\pc^{sla}$. First we notice that $\pa$ does not change with read repair rate, as expected (Table.~\ref{table:knobs}). Second, we notice that the convergence of $\pc$ is very slow -- it changes from 0.25 to 0.3 over a long period of 1000~s. 

Due to this slow convergence, we conclude that read repair rate is useful only when network delays remain relatively stable. Under continuously changing network conditions (e.g., a lognormal distribution) convergence may be slower and thus read delay should be used as the only control knob.

\subsubsection{Scalability}
\vspace{-1mm}
\begin{figure}[htbp]
	\centering
		\includegraphics[width=0.4\textwidth]{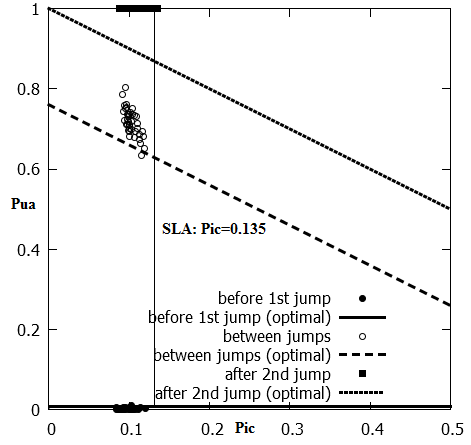}
	\caption{\it Scatter plot for same settings as Fig.~\ref{fig:s_c_cass_ln}, but with 32 servers and 16K ops/s.}
	\label{fig:s_c_cass_ln_24}
\end{figure}

We measure scalability via an increased workload on PCAP Cassandra. Compared to Fig.~\ref{fig:s_c_cass_ln}, in this new run we increased the number of servers from 9 to 32, and throughput to 16000 ops/s, and ensured that each server stores at least some keys. All other settings are unchanged compared to Fig.~\ref{fig:s_c_cass_ln}. The result is shown Fig.~\ref{fig:s_c_cass_ln_24}. Compared with Fig.~\ref{fig:s_c_cass_ln}, we observe an improvement with scale -- in particular, increasing the number of servers brings the system closer to optimal. 

\subsubsection{Effect of Timeliness Requirement}
\label{sec:exp_ttimeliness}
The timeliness requirements in an SLA directly affect how close the PCAP system is to the optimal-achievable envelope. Fig.~\ref{fig:d_c_cass} shows the effect of varying the timeliness parameter $t_a$ in a consistency SLA ($t_c=0~ms$, $\pc=0.135$) experiment for PCAP Cassandra with 10~ms node to LAN delays. For each $t_a$, we consider the cluster of the  $(\pa,\pc)$ points achieved by the PCAP system in its stable state, calculate its centroid, and measure (and plot on vertical axis) the distance $d$ from this centroid to the optimal-achievable consistency-latency envelope. Note that the optimal envelope calculation also involves $t_a$, since $\alpha$ depends on it (Section~\ref{sec:expt_sharp_av}). 

Fig.~\ref{fig:d_c_cass} shows that when $t_a$ is too stringent ($<$ 100~ms), the PCAP system may be far from the optimal envelope even when it satisfies the SLA. In the case of Fig.~\ref{fig:d_c_cass}, this is because in our network, the average time to cross four hops (client to coordinator to replica, and the reverse) is $20 \times 4 = 80$~ms.\footnote{{ Round-trip time for each hop is $2 \times 10 = 20$~ms.}} As $t_a$ starts to go beyond this (e.g., $t_a \geq$ 100~ms), the timeliness requirements are less stringent,and PCAP is essentially optimal (very close to the achievable envelope).
\begin{figure}[htbp]
	\centering
		\includegraphics[width=0.45\textwidth]{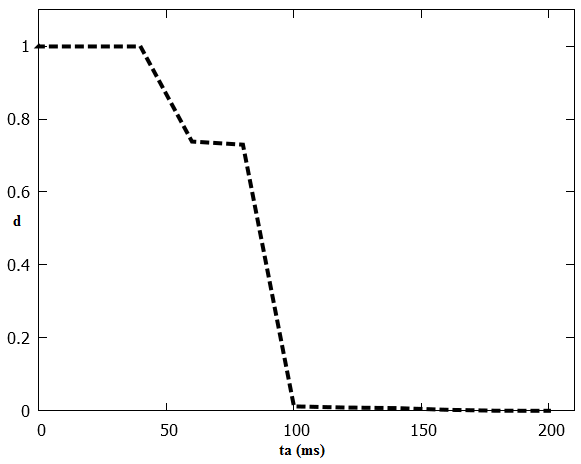}
	\caption{\it Effect of Timeliness Requirement ($t_a$) on PCAP Cassandra. Consistency SLA with $\pc=0.135$, $t_c=0~ms$.}
	\label{fig:d_c_cass}
\end{figure}

\subsubsection{Passive Measurement Approach}
\label{sec:passive}

So far all our experiments have used the active measurement approach. In this section, we repeat a PCAP Cassandra consistency SLA experiment ($\pc=0.2$, $t_c=0~ms$) using a passive measurement approach. 

In Figure~\ref{fig:passive}, instead of actively injecting operations, we sample ongoing client operations. We estimate $\pc$ and $\pa$ from the 100 latest operations from 5 servers selected randomly. 


At the beginning, the delay is lognormally distributed with $\mu=1~ms$, $\sigma=0.1~ms$. The passive approach initially converges to the SLA. We change the delay ($\mu=2~ms$, $\sigma=0.2~ms$) at $t=325~s$. We observe that, compared to the active approach, 1) consistency (SLA metric) oscillates more, and 2) the availability (non-SLA metric) is farther from optimal and takes longer to converge. For the passive approach, SLA convergence and non-SLA optimization depends heavily on the sampling of operations used to estimate the metrics. Thus we conclude that it is harder to satisfy SLA and optimize the non-SLA metric with the passive approach.


\begin{figure}[htbp]
	\centering
		\includegraphics[width=0.4\textwidth]{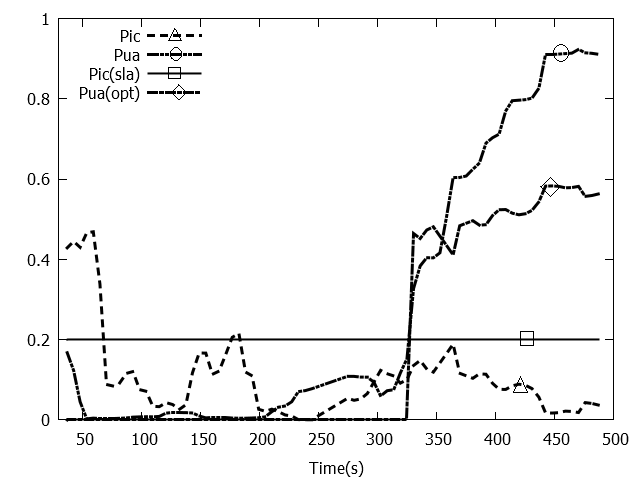}
	\caption{\it Consistency SLA with PCAP Cassandra under Lognormal delay distribution: Timeline (Passive).}
	\label{fig:passive}
\end{figure}

\subsection{GeoPCAP Evaluation}
\label{sec:geo-pcap-eval}

We evaluate GeoPCAP with a Monte-Carlo simulation. In our setup, we have four data-centers, among which three are remote data-centers holding replicas of a key, and the fourth one is the local data-center. At each iteration, we estimate $t$-freshness per data-center using a variation of the well-known WARS model~\cite{pbsj}. The WARS model is based on Dynamo style quorum systems~\cite{DHJ07}, where data staleness is due to read and write message reordering. The model has four components. $W$ represents the message delay from coordinator to replica. The acknowledgment from the replica back to the coordinator is modeled by a random variable $A$. The read message delay from coordinator to replica, and the acknowledgment back are represented by $R$, and $S$, respectively. A read will be stale if a read is acknowledged before a write reaches the replica, i.~e.~, $R+S < W+A$. In our simulation, we ignore the $A$ component since we do not need to wait for a write to finish before a read starts. We use the LinkedIn SSD disk latency distribution~\cite{pbsj}, Table 3 for read/write operation latency values. 

We model the WAN delay using a normal distribution $N(20~ms,\sqrt{2}~ms)$ based on results from~\cite{WAN-model}. Each simulation runs for 300 iterations. At each iteration, we run the PID control loop (Figure~\ref{alg:gadapt}) to estimate a new value for geo-delay $\Delta$, and sleep for 1 sec. All reads in the following iteration are delayed at the local data-center by $\Delta$.  At iteration 150, we inject a jump by increasing the mean and standard deviation of each WAN link delay normal distribution to $22~ms$ and $\sqrt{2.2}~ms$, respectively. We show only results for consistency and latency SLA for the \texttt{ALL} composition. The \texttt{QUICKEST} composition results are similar and are omitted. 

\begin{figure}[!htb]
    \centering
    \begin{minipage}{.5\textwidth}
        \centering
        \includegraphics[width=0.9\linewidth, height=0.25\textheight]{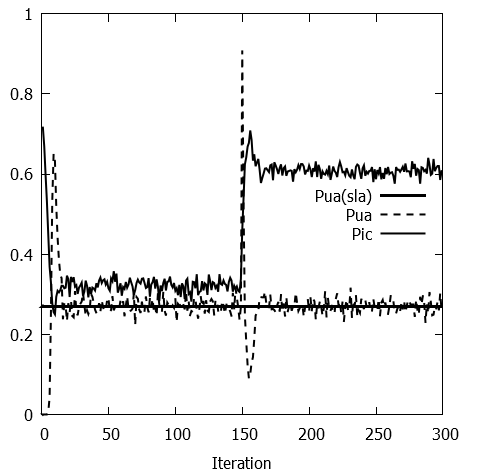}
        \caption{\it GeoPCAP SLA Timeline \\for L SLA (\texttt{ALL}).}
        \label{fig:geo-av-all-sla}
    \end{minipage}%
    \begin{minipage}{0.5\textwidth}
        \centering
        \includegraphics[width=0.9\linewidth, height=0.25\textheight]{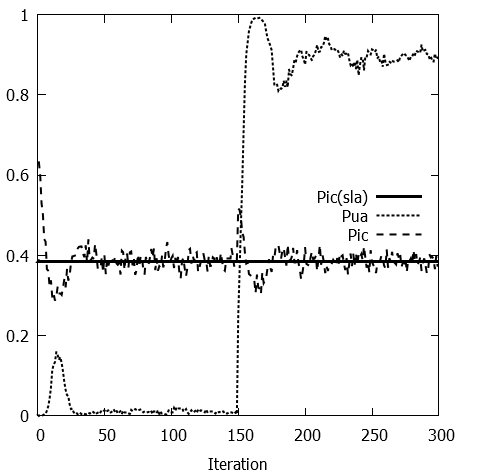}
        \caption{\it GeoPCAP SLA Timeline \\for C SLA (\texttt{ALL}).}
        \label{fig:geo-con-all-sla}
    \end{minipage}
\end{figure}

Figure~\ref{fig:geo-av-all-sla} shows the timeline of SLA convergence for GeoPCAP Latency SLA ($\pa^{sla}=0.27, ta^{sla}=25~ms, tc^{sla}=0.1~ms$). We observe that using the PID controller ($k_p=1$, $k_d=0.5$, $k_i=0.5$), both the SLA and the other metric converge within 5 iterations initially and also after the jump. Figure~\ref{fig:geo-av-all-rdelay} shows the corresponding evolution of the geo-delay control knob. Before the jump, the read delay converges to around 5~ms. After the jump, the WAN delay increase forces the geo-delay to converge to a lower value (around 3~ms) in order to meet the latency SLA.

\begin{figure}[!htb]
    \centering
    \begin{minipage}{.5\textwidth}
        \centering
        \includegraphics[width=0.9\linewidth, height=0.25\textheight]{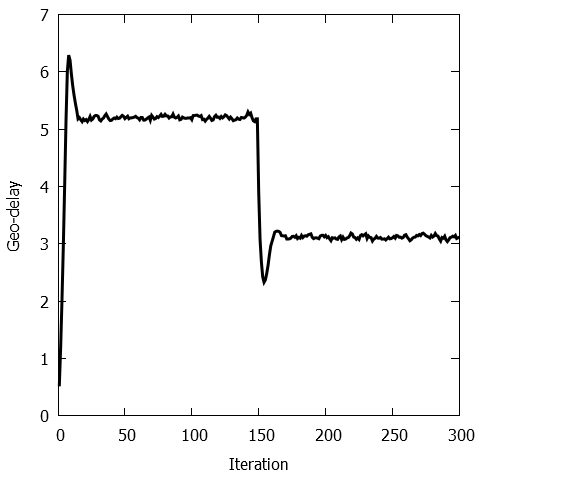}
        \caption{\it Geo-delay Timeline \\for L SLA (\texttt{ALL}) (Figure~\ref{fig:geo-av-all-sla}).}
        \label{fig:geo-av-all-rdelay}
        
    \end{minipage}%
    \begin{minipage}{0.5\textwidth}
        \centering
        \includegraphics[width=0.9\linewidth, height=0.25\textheight]{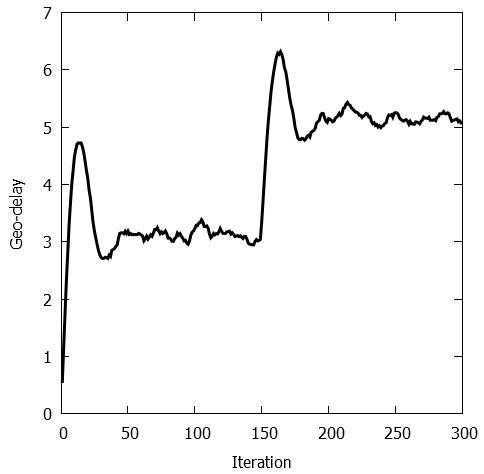}
        \caption{\it Geo-delay Timeline \\for C SLA (\texttt{ALL}) (Figure~\ref{fig:geo-con-all-sla}).}
        \label{fig:geo-con-all-rdelay}
    \end{minipage}
\end{figure}

Figure~\ref{fig:geo-con-all-sla} shows the consistency SLA ($\pc^{sla}=0.38, tc^{sla}=1~ms, ta^{sla}=25~ms$) time line. Here convergence takes 25 iterations, thanks to the PID controller (($k_p=1$, $k_d=0.8$, $k_i=0.5$)). We needed a slightly higher value for the differential gain $k_d$ to deal with increased oscillation for the consistency SLA experiment. Note that the $\pc^{sla}$ value of 0.38 forces a smaller per data-center $\pc$ convergence. The corresponding geo-delay evolution (Figure~\ref{fig:geo-con-all-rdelay}) initially converges to around 3~ms before the jump, and converges to around 5~ms after the jump, to enforce the consistency SLA after the delay increase.  



\begin{figure}[htbp]
	\centering
		\includegraphics[width=0.4\textwidth]{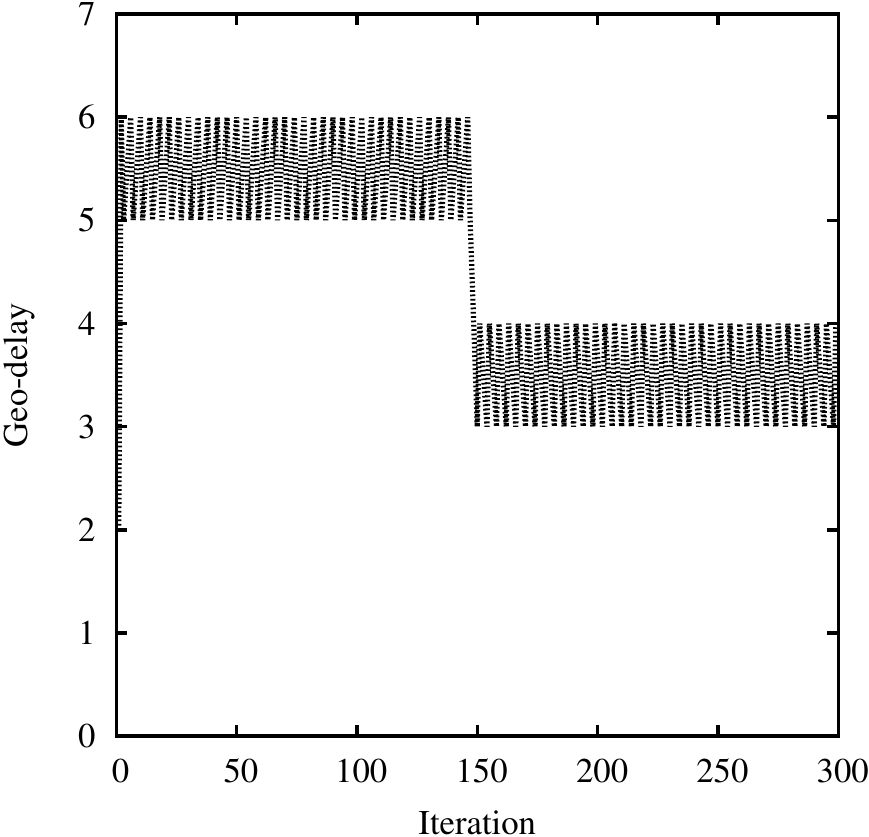}
	\caption{\it Geo-delay Timeline for A SLA (\texttt{ALL}) with Multiplicative Approach.}
	\label{fig:geo-av-all-rdelay-mult}
\end{figure}

We also repeated the Latency SLA experiment with the \texttt{ALL} composition (Figures~\ref{fig:geo-av-all-sla},~\ref{fig:geo-av-all-rdelay}) using the multiplicative control approach (Section~\ref{control}) instead of the PID control approach. Figure~\ref{fig:geo-av-all-rdelay-mult} shows the corresponding geo-delay trend compared to Figure~\ref{fig:geo-av-all-rdelay}. Comparing the two figures, we observe that although the multiplicative strategy converges as fast the PID approach both before and after the delay jump, the read delay value keeps oscillating around the optimal value. Such oscillations cannot be avoided in the multiplicative approach, since at steady state the control loop keeps changing direction with a unit step size. Compared to the multiplicative approach, the PID control approach is smoother and has less oscillations.




\section{Related Work}
\label{sec:related}




\subsection{Consistency-Latency Tradeoffs} 
There has been work on theoretically characterizing the tradeoff between latency and strong consistency models. Attiya and Welch~\cite{attiya_welch_TOCS94} studied the tradeoff between latency and linearizability and sequential consistency. Subsequent work has explored linearizablity under different delay models~\cite{EM_disc99,Roth_TCS}. All these papers are concerned with strong consistency models whereas we consider $t$-freshness, which models data freshness in eventually consistent systems. Moreover, their delay models are different from our partition model. There has been theoretical work on probabilistic quorum systems~\cite{Malkhi:1997:PQS:259380.259458,DBLP:conf/icdcs/LeeW01,DBLP:journals/dc/AbrahamM05}. Their consistency models are different from ours; moreover, they did not consider the tradeoff between consistency and availability.

There are two classes of systems that are closest to our work. The first class of systems are concerned with metrics for measuring data freshness or staleness. We do not compare our work against this class of systems in this paper, as it is not our goal to propose yet another consistency model or metric. Bailis et al.~\cite{pbsw,pbsj} propose a probabilistic consistency model (PBS) for quorum-based stores, but did not consider latency, soft partitions or the CAP theorem. Golab et al.~\cite{gls:fun} propose a time-based staleness metric called $\Delta$-atomicity. $\Delta$-atomicity is considered the gold standard for measuring atomicity violations (staleness) across multiple read and write operations. The $\Gamma$ metric~\cite{gamma} is inspired by the $\Delta$ metric and improves upon it on multiple fronts. For example, the $\Gamma$ metric makes fewer technical assumptions than the $\Delta$ metric and produces less noisy results. It is also more robust against clock skew. All these related data freshness metrics cannot be directly compared to our $t$-freshness metric. The reason is that unlike our metric which considers write start times, these existing metrics consider end time of write operations when calculating data freshness.  

The second class of systems deal with adaptive mechanisms for meeting consistency-latency SLAs for key-value stores. The Pileus system~\cite{pileus} considers families of consistency/latency SLAs, and requires the application to specify a utility value with each SLA. In comparison, PCAP considers probabilistic metrics of $\pc,\pa$. Tuba~\cite{tuba} extends the predefined and static Pileus mechanisms with dynamic replica reconfiguration mechanisms to maximize Pileus style utility functions without impacting client read and write operations. Golab and Wylie~\cite{GW2014} propose consistency amplification, which is a framework for supporting consistency SLAs by injecting delays at servers or clients. In comparison, in our PCAP system, we only add delays at servers. McKenzie et al.~\cite{GolabCPQ15} propose continuous partial quorums (CPQ), which is a technique to randomly choose between multiple discrete consistency levels for fine-grained consistency-latency tuning, and compare CPQ against consistency amplification. Compared to all these systems where the goal is to meet SLAs, in our work, we also (1) quantitatively characterize the (un)achievable consistency-latency tradeoff envelope, and (2) show how to design systems that perform close to this envelope, in addition to (3) meeting SLAs. The PCAP system can be setup to work with any of these SLAs listed above; but we don't do this in the paper since our main goal is to measure how close the PCAP system is to the optimal consistency-latency envelope.  

Recently, there has been work on declarative ways to specify application consistency and latency requirements -- PCAP proposes mechanisms to satisfy such specifications~\cite{DBLP:conf/pldi/Sivaramakrishnan15}.






\subsection{Adaptive Systems}
There are a few existing systems that controls consistency in storage systems. {FRACS~\cite{zz:trading} controls consistency by allowing replicas to buffer
updates up to a given staleness. 
AQuA~\cite{ksc:qos} continuously moves replicas between ``strong'' and ``weak'' consistency groups to implement different consistency levels. Fox and Brewer~\cite{harvestyield} show how to trade consistency (harvest) for availability (yield) in the context of the Inktomi search engine. While harvest and yield capture continuously changing consistency and availability conditions, we characterize the consistency-availability (latency) tradeoff in a quantitative manner. 
TACT~\cite{conit} controls staleness by limiting the number of outstanding writes at replicas (order error) and bounding write propagation delay (staleness).} All the mentioned systems provide {\it best-effort} behavior for consistency, within the latency bounds. In comparison, the PCAP system explicitly allows applications to specify SLAs. Consistency levels have been adaptively changed to deal with node failures and network changes in~\cite{pbsf}, however this may be intrusive for applications that explicitly set consistency levels for operations. Artificially delaying read operations at servers (similar to our read delay knob) has been used to eliminate staleness spikes (improve consistency) which are correlated with garbage collection in a specific key-value store (Apache Cassandra)~\cite{Fan:2015:UCC:2752939.2752949}. Similar techniques have been used to guarantee causal consistency for client-side applications~\cite{Zawirski:2015:WFR:2814576.2814733}. Simba~\cite{simba} proposes new consistency abstractions for mobile application data synchronization services, and allows applications to choose among various consistency models.

For stream processing, Gedik et al.~\cite{6678504} propose a control algorithm to compute the optimal resource requirements to meet throughput requirements. There has been work on adaptive elasticity control for storage~\cite{Lim:2010:ACE:1809049.1809051}, and adaptively tuning Hadoop clusters to meet  SLAs~\cite{Herodotou:2011:NOS:2038916.2038934}. Compared to the controllers present in these systems, our PCAP controller achieves control objectives~\cite{Hellerstein:2004:FCC:975344} using a different set of techniques to meet SLAs for key-value stores.   

\subsection{Composition} Composing local policies to form global policies is well studied in other domains, for example QoS composition in multimedia networks~\cite{DBLP:journals/mms/LiangN06}, software defined network (sdn) composition~\cite{Monsanto:2013:CSN:2482626.2482629}, and web-service orchestration~\cite{Dustdar:2005:SWS:1358537.1358538}. Our composition techniques are aimed at consistency and latency guarantees for geo-distributed systems.
\section{Summary}
In this paper, we have first formulated and proved a probabilistic variation of the CAP theorem which took into account probabilistic models for consistency, latency, and soft partitions within a data-center. Our theorems show the un-achievable envelope, i.e., which combinations of these three models make them impossible to achieve together. We then show how to design systems (called PCAP) that (1) perform close to this optimal envelope, and (2) can meet consistency and latency SLAs derived from the corresponding models. 
We then incorporated these SLAs into Apache Cassandra and Riak running in a single data-center. We also extended our PCAP system from a single data-center to multiple geo-distributed data-centers. Our experiments with YCSB workloads and realistic traffic demonstrated that our PCAP system meets the SLAs, that its performance is close to the optimal-achievable consistency-availability envelope, and that it scales well. Simulations of our GeoPCAP system also showed SLA satisfaction for applications spanning multiple data-centers.

   



\bibliographystyle{abbrv}
\bibliography{bibfile}

\begin{thebibliography}{10}

\bibitem{abadi:cap}
D.~Abadi.
\newblock Consistency tradeoffs in modern distributed database system design:
  {CAP} is only part of the story.
\newblock {\em IEEE Computer}, 45(2):37--42, 2012.

\bibitem{DBLP:journals/dc/AbrahamM05}
I.~Abraham and D.~Malkhi.
\newblock Probabilistic quorums for dynamic systems.
\newblock {\em Distributed Computing}, 18(2):113--124, 2005.

\bibitem{tuba}
M.~S. Ardekani and D.~B. Terry.
\newblock A self-configurable geo-replicated cloud storage system.
\newblock In {\em Proceedings of the 11th USENIX Conference on Operating
  Systems Design and Implementation (OSDI)}, pages 367--381, Broomfield, CO,
  USA, 2014.

\bibitem{AstHag95}
K.~J. Astrom and T.~Hagglund.
\newblock {\em PID Controllers: Theory, Design, and Tuning, 2nd Ed.}
\newblock The Instrument, Systems, and Automation Society, Research Triangle
  Park, NC, 1995.

\bibitem{attiya_welch_TOCS94}
H.~Attiya and J.~L. Welch.
\newblock Sequential consistency versus linearizability.
\newblock {\em ACM Transactions on Computer Systems (TOCS)}, 12(2):91--122,
  1994.

\bibitem{ectoday}
P.~Bailis and A.~Ghodsi.
\newblock Eventual consistency today: Limitations, extensions, and beyond.
\newblock {\em ACM Queue}, 11(3):20:20--20:32, 2013.

\bibitem{pbsj}
P.~Bailis, S.~Venkataraman, M.~J. Franklin, J.~Hellerstein, and I.~Stoica.
\newblock Quantifying eventual consistency with pbs.
\newblock {\em The Very Large Data Bases (VLDB) Journal}, 23(2):279--302, 2014.

\bibitem{pbsw}
P.~Bailis, S.~Venkataraman, M.~J. Franklin, J.~M. Hellerstein, and I.~Stoica.
\newblock Pbs at work: Advancing data management with consistency metrics.
\newblock In {\em Proceedings of the 2013 ACM SIGMOD International Conference
  on Management of Data (SIGMOD)}, pages 1113--1116, New York, New York, USA,
  2013.

\bibitem{WAN-model}
R.~Baldoni, C.~Marchetti, and A.~Virgillito.
\newblock Impact of wan channel behavior on end-to-end latency of replication
  protocols.
\newblock In {\em Proceedings of European Dependable Computing Conference
  (EDCC)}, pages 109--118, Coimbra, Portugal, Oct 2006.

\bibitem{rtad}
J.~Barr.
\newblock Real-time ad impression bids using dynamodb, 2013.
\newblock \url{http://goo.gl/C7gdpc}.

\bibitem{BAM10}
T.~Benson, A.~Akella, and D.~A. Maltz.
\newblock Network traffic characteristics of datacenters in the wild.
\newblock In {\em Proceedings of the 10th ACM SIGCOMM Conference on Internet
  Measurement (IMC)}, pages 267--280, Melbourne, Australia, 2010.

\bibitem{B10}
E.~Brewer.
\newblock A certain freedom: Thoughts on the cap theorem.
\newblock In {\em Proceedings of the 29th ACM SIGACT-SIGOPS Symposium on
  Principles of Distributed Computing (PODC)}, pages 335--335, Zurich,
  Switzerland, 2010.

\bibitem{B00}
E.~A. Brewer.
\newblock Towards robust distributed systems {(Abstract)}.
\newblock In {\em Proceedings of the Nineteenth Annual ACM Symposium on
  Principles of Distributed Computing (PODC)}, Portland, Oregon, USA, 2000.

\bibitem{C13}
A.~Cockcroft.
\newblock Dystopia as a service {(Invited Talk)}.
\newblock In {\em Proceedings of the 4th Annual Symposium on Cloud Computing
  (SoCC)}, Santa Clara, California, 2013.

\bibitem{ycsb}
B.~F. Cooper, A.~Silberstein, E.~Tam, R.~Ramakrishnan, and R.~Sears.
\newblock Benchmarking cloud serving systems with ycsb.
\newblock In {\em Proceedings of the 1st ACM Symposium on Cloud Computing
  (SoCC)}, pages 143--154, Indianapolis, Indiana, USA, 2010.

\bibitem{ycsb1}
B.~F. Cooper, A.~Silberstein, E.~Tam, R.~Ramakrishnan, and R.~Sears.
\newblock Yahoo! cloud serving benchmark (ycsb), 2010.
\newblock \url{http://goo.gl/GiA5c}.

\bibitem{goo:span}
J.~C. Corbett, J.~Dean, M.~Epstein, A.~Fikes, C.~Frost, J.~J. Furman,
  S.~Ghemawat, A.~Gubarev, C.~Heiser, P.~Hochschild, W.~Hsieh, S.~Kanthak,
  E.~Kogan, H.~Li, A.~Lloyd, S.~Melnik, D.~Mwaura, D.~Nagle, S.~Quinlan,
  R.~Rao, L.~Rolig, Y.~Saito, M.~Szymaniak, C.~Taylor, R.~Wang, and
  D.~Woodford.
\newblock Spanner: Google's globally-distributed database.
\newblock In {\em Proceedings of the 10th USENIX Conference on Operating
  Systems Design and Implementation (OSDI)}, pages 251--264, Hollywood, CA,
  USA, 2012.

\bibitem{cassie-cl}
Datastax.
\newblock Configuring data consistency.
\newblock 2016.
\newblock \url{http://goo.gl/284KSR}.

\bibitem{pbsf}
A.~Davidson, A.~Rubinstein, A.~Todi, P.~Bailis, and S.~Venkataraman.
\newblock Adaptive hybrid quorums in practical settings, 2013.
\newblock \url{http://goo.gl/LbRSW3}.

\bibitem{dean2009}
J.~Dean.
\newblock {Design, Lessons and Advice from Building Large Distributed Systems},
  2009.

\bibitem{DHJ07}
G.~DeCandia, D.~Hastorun, M.~Jampani, G.~Kakulapati, A.~Lakshman, A.~Pilchin,
  S.~Sivasubramanian, P.~Vosshall, and W.~Vogels.
\newblock Dynamo: Amazon's highly available key-value store.
\newblock In {\em Proceedings of Twenty-first ACM SIGOPS Symposium on Operating
  Systems Principles (SOSP)}, pages 205--220, Stevenson, Washington, USA, 2007.

\bibitem{Dustdar:2005:SWS:1358537.1358538}
S.~Dustdar and W.~Schreiner.
\newblock A survey on web services composition.
\newblock {\em International Journal of Web and Grid Services}, 1(1):1--30,
  2005.

\bibitem{EM_disc99}
M.~Eleftheriou and M.~Mavronicolas.
\newblock Linearizability in the presence of drifting clocks and under
  different delay assumptions.
\newblock In {\em Distributed Computing}, volume 1693, pages 327--341. 1999.

\bibitem{Fan:2015:UCC:2752939.2752949}
H.~Fan, A.~Ramaraju, M.~McKenzie, W.~Golab, and B.~Wong.
\newblock Understanding the causes of consistency anomalies in apache
  cassandra.
\newblock {\em Proceedings of the Very Large Data Bases (VLDB) Endowment},
  8(7):810--813, 2015.

\bibitem{harvestyield}
A.~Fox and E.~A. Brewer.
\newblock Harvest, yield, and scalable tolerant systems.
\newblock In {\em Proceedings of the The Seventh Workshop on Hot Topics in
  Operating Systems (HotOS)}, pages 174--178, Rio Rico, Arizona, USA, 1999.

\bibitem{6678504}
B.~Gedik., S.~Schneider, M.~Hirzel, and K.-L. Wu.
\newblock Elastic scaling for data stream processing.
\newblock {\em IEEE Transactions on Parallel and Distributed Systems (TPDS)},
  25(6):1447--1463, 2014.

\bibitem{lg:cap}
S.~Gilbert and N.~Lynch.
\newblock Brewer's conjecture and the feasibility of consistent, available,
  partition-tolerant web services.
\newblock {\em ACM SIGACT News}, 33(2):51--59, 2002.

\bibitem{GL10}
S.~Gilbert and N.~A. Lynch.
\newblock Perspectives on the cap theorem.
\newblock {\em IEEE Computer}, 45(2):30--36, 2012.

\bibitem{cassandra-geo}
E.~Gilmore.
\newblock Cassandra multi data-center deployment, 2011.
\newblock \url{http://goo.gl/aA8YIS}.

\bibitem{GBK11}
L.~Glendenning, I.~Beschastnikh, A.~Krishnamurthy, and T.~Anderson.
\newblock Scalable consistency in scatter.
\newblock In {\em Proceedings of the Twenty-Third ACM Symposium on Operating
  Systems Principles (SOSP)}, pages 15--28, Cascais, Portugal, 2011.

\bibitem{gls:fun}
W.~Golab, X.~Li, and M.~A. Shah.
\newblock Analyzing consistency properties for fun and profit.
\newblock In {\em Proceedings of the 30th Annual ACM SIGACT-SIGOPS Symposium on
  Principles of Distributed Computing (PODC)}, pages 197--206, San Jose,
  California, USA, 2011.

\bibitem{gamma}
W.~Golab, M.~R. Rahman, A.~AuYoung, K.~Keeton, and I.~Gupta.
\newblock Client-centric benchmarking of eventual consistency for cloud storage
  systems.
\newblock In {\em Proceedings of the 2014 IEEE 34th International Conference on
  Distributed Computing Systems (ICDCS)}, pages 493--502, Madrid, Spain, 2014.

\bibitem{GW2014}
W.~Golab and J.~J. Wylie.
\newblock Providing a measure representing an instantaneous data consistency
  level, Jan. 2014.
\newblock US Patent Application 20,140,032,504.

\bibitem{riak}
A.~Gross.
\newblock Basho riak.
\newblock 2009.
\newblock \url{http://basho.com/riak/}.

\bibitem{Hellerstein:2004:FCC:975344}
J.~L. Hellerstein, Y.~Diao, S.~Parekh, and D.~M. Tilbury.
\newblock {\em Feedback Control of Computing Systems}.
\newblock John Wiley \& Sons, 2004.

\bibitem{WHCL15}
J.~C. J.~L. Hengfeng~Wei, Yu~Huang.
\newblock Almost strong consistency: "good enough" in distributed storage
  systems, 2015.
\newblock \url{http://arxiv.org/abs/1507.01663}.

\bibitem{Herodotou:2011:NOS:2038916.2038934}
H.~Herodotou, F.~Dong, and S.~Babu.
\newblock No one (cluster) size fits all: Automatic cluster sizing for
  data-intensive analytics.
\newblock In {\em Proceedings of the 2nd ACM Symposium on Cloud Computing
  (SoCC)}, pages 18:1--18:14, Cascais, Portugal, 2011.

\bibitem{ksc:qos}
S.~Krishnamurthy, W.~H. Sanders, and M.~Cukier.
\newblock An adaptive quality of service aware middleware for replicated
  services.
\newblock {\em IEEE Transactions on Parallel and Distributed Systems (TPDS)},
  14:1112--1125, 2003.

\bibitem{cassandra}
A.~Lakshman and P.~Malik.
\newblock Apache cassandra, 2008.
\newblock \url{http://cassandra.apache.org/}.

\bibitem{U09}
K.~Lang.
\newblock Amazon: Milliseconds means money, 2009.
\newblock \url{http://goo.gl/fs9pZb}.

\bibitem{DBLP:conf/icdcs/LeeW01}
H.~Lee and J.~L. Welch.
\newblock Applications of probabilistic quorums to iterative algorithms.
\newblock In {\em Proceedings of the The 21st International Conference on
  Distributed Computing Systems (ICDCS)}, page~21, Phoenix (Mesa), Arizona,
  USA, 2001.

\bibitem{LPC12}
C.~Li, D.~Porto, A.~Clement, J.~Gehrke, N.~Pregui\c{c}a, and R.~Rodrigues.
\newblock Making geo-replicated systems fast as possible, consistent when
  necessary.
\newblock In {\em Proceedings of the 10th USENIX Conference on Operating
  Systems Design and Implementation (OSDI)}, pages 265--278, Hollywood, CA,
  USA, 2012.

\bibitem{DBLP:journals/mms/LiangN06}
J.~Liang and K.~Nahrstedt.
\newblock Service composition for generic service graphs.
\newblock {\em Multimedia Systems}, 11(6):568--581, 2006.

\bibitem{Lim:2010:ACE:1809049.1809051}
H.~C. Lim, S.~Babu, and J.~S. Chase.
\newblock Automated control for elastic storage.
\newblock In {\em Proceedings of the 7th International Conference on Autonomic
  Computing (ICAC)}, pages 1--10, Washington, DC, USA, 2010.

\bibitem{voldemort}
LinkedIn.
\newblock Project voldemort.
\newblock 2009.
\newblock \url{http://goo.gl/9uhLoU}.

\bibitem{cops:sosp11}
W.~Lloyd, M.~J. Freedman, M.~Kaminsky, and D.~G. Andersen.
\newblock Don't settle for eventual: Scalable causal consistency for wide-area
  storage with cops.
\newblock In {\em Proceedings of the Twenty-Third ACM Symposium on Operating
  Systems Principles (SOSP)}, pages 401--416, Cascais, Portugal, 2011.

\bibitem{Eiger}
W.~Lloyd, M.~J. Freedman, M.~Kaminsky, and D.~G. Andersen.
\newblock Stronger semantics for low-latency geo-replicated storage.
\newblock In {\em Proceedings of the 10th USENIX Conference on Networked
  Systems Design and Implementation (NSDI)}, pages 313--328, Lombard, IL, USA,
  2013.

\bibitem{Malkhi:1997:PQS:259380.259458}
D.~Malkhi, M.~Reiter, and R.~Wright.
\newblock Probabilistic quorum systems.
\newblock In {\em Proceedings of the Sixteenth Annual ACM Symposium on
  Principles of Distributed Computing (PODC)}, pages 267--273, Santa Barbara,
  California, USA, 1997.

\bibitem{Roth_TCS}
M.~Mavronicolas and D.~Roth.
\newblock Linearizable read/write objects.
\newblock {\em Theoretical Computer Science}, 220(1):267 -- 319, 1999.

\bibitem{GolabCPQ15}
M.~McKenzie, H.~Fan, and W.~Golab.
\newblock Fine-tuning the consistency-latency trade-off in quorum-replicated
  distributed storage systems.
\newblock In {\em Proceedings of the 2015 IEEE International Conference on Big
  Data}, pages 1708--1717, Santa Clara, CA, USA, 2015.

\bibitem{Monsanto:2013:CSN:2482626.2482629}
C.~Monsanto, J.~Reich, N.~Foster, J.~Rexford, and D.~Walker.
\newblock Composing software-defined networks.
\newblock In {\em Proceedings of the 10th USENIX Conference on Networked
  Systems Design and Implementation (NSDI)}, pages 1--14, Lombard, IL, USA,
  2013.

\bibitem{activo}
R.~Peled.
\newblock Activo: ``why low latency matters?''.
\newblock 2010.
\newblock \url{http://goo.gl/2XQ8Ul}.

\bibitem{simba}
D.~Perkins, N.~Agrawal, A.~Aranya, C.~Yu, Y.~Go, H.~V. Madhyastha, and
  C.~Ungureanu.
\newblock Simba: Tunable end-to-end data consistency for mobile apps.
\newblock In {\em Proceedings of the Tenth European Conference on Computer
  Systems (EuroSys)}, pages 7:1--7:16, Bordeaux, France, 2015.

\bibitem{cfrdt}
M.~Shapiro, N.~M. Pregui\c{c}a, C.~Baquero, and M.~Zawirski.
\newblock Conflict-free replicated data types.
\newblock In {\em Proceedings of the 13th International Conference on
  Stabilization, Safety, and Security of Distributed Systems (SSS)}, pages
  386--400, Grenoble, France, 2011.

\bibitem{DBLP:conf/pldi/Sivaramakrishnan15}
K.~C. Sivaramakrishnan, G.~Kaki, and S.~Jagannathan.
\newblock Declarative programming over eventually consistent data stores.
\newblock In {\em Proceedings of the 36th ACM SIGPLAN Conference on Programming
  Language Design and Implementation (PLDI)}, pages 413--424, Portland, OR,
  USA, 2015.

\bibitem{chord}
I.~Stoica, R.~Morris, D.~Karger, M.~F. Kaashoek, and H.~Balakrishnan.
\newblock Chord: A scalable peer-to-peer lookup service for internet
  applications.
\newblock In {\em Proceedings of the 2001 Conference on Applications,
  Technologies, Architectures, and Protocols for Computer Communications
  (SIGCOMM)}, pages 149--160, San Diego, California, USA, 2001.

\bibitem{rdamazon}
S.~Swidler.
\newblock Consistency in amazon s3, 2009.
\newblock \url{http://goo.gl/yhAoJy}.

\bibitem{pileus}
D.~B. Terry, V.~Prabhakaran, R.~Kotla, M.~Balakrishnan, M.~K. Aguilera, and
  H.~Abu-Libdeh.
\newblock Consistency-based service level agreements for cloud storage.
\newblock In {\em Proceedings of the Twenty-Fourth ACM Symposium on Operating
  Systems Principles (SOSP)}, pages 309--324, Farminton, Pennsylvania, USA,
  2013.

\bibitem{V09}
W.~Vogels.
\newblock Eventually consistent.
\newblock {\em Communications of the ACM (CACM)}, pages 40--44, 2009.

\bibitem{emulab}
B.~White, J.~Lepreau, L.~Stoller, R.~Ricci, S.~Guruprasad, M.~Newbold,
  M.~Hibler, C.~Barb, and A.~Joglekar.
\newblock An integrated experimental environment for distributed systems and
  networks.
\newblock In {\em Proceedings of the Fifth Symposium on Operating Systems
  Design and Implementation (OSDI)}, pages 255--270, Boston, Massachusetts,
  2002.

\bibitem{conit}
H.~Yu and A.~Vahdat.
\newblock Design and evaluation of a conit-based continuous consistency model
  for replicated services.
\newblock volume~20, pages 239--282, 2002.

\bibitem{Zawirski:2015:WFR:2814576.2814733}
M.~Zawirski, N.~Pregui\c{c}a, S.~Duarte, A.~Bieniusa, V.~Balegas, and
  M.~Shapiro.
\newblock Write fast, read in the past: Causal consistency for client-side
  applications.
\newblock In {\em Proceedings of the 16th Annual Middleware Conference}, pages
  75--87, Vancouver, BC, Canada, 2015.

\bibitem{zz:trading}
C.~Zhang and Z.~Zhang.
\newblock Trading replication consistency for performance and availability: an
  adaptive approach.
\newblock In {\em Proceedings of the 23rd International Conference on
  Distributed Computing Systems (ICDCS)}, pages 687--695, Providence, Rhode
  Island, USA, 2003.

\end{thebibliography}

\end{document}